\documentclass[twocolumn]{autart}    

\usepackage{graphicx}          
\usepackage{xcolor}
\usepackage{amsmath,amsfonts,amssymb}
\usepackage{soul} 
\definecolor{ThananaEdit}{rgb}{0.0,0.3,0.9} 
\definecolor{ThananaMark}{rgb}{0.80,0.90,1.00} 

\usepackage{tikz}
\usetikzlibrary{arrows.meta,positioning,shapes.geometric}
\usetikzlibrary{shadows}
\usetikzlibrary{shadings}
\usetikzlibrary{backgrounds, fit}

\usepackage{algorithm}
\usepackage{algpseudocode}

\usepackage{multirow}

\usepackage{enumitem}

\usepackage{mathrsfs}

\usepackage{graphicx}

\begin{document}
	
	\begin{frontmatter}
		
		\title{Cognitive Flexibility as a Latent Structural Operator for Bayesian State Estimation} 
		
		\author[KMUTT]{Thanana Nuchkrua}\ead{thanana.nuch@yahoo.com}%
		\author[KMUTT]{Sudchai Boonto}\ead{sudchai.boo@kmutt.ac.th}%
		\author[UIC]{Xiaoqi Liu}\ead{xliu276@uic.edu}%
		
		\address[KMUTT]{Department of Control Systems and Instrumentation Engineering, King Mongkut's University of Technology Thonburi, Thailand}  
		\address[UIC]{Department of Computer Science, University of Illinois Chicago, USA.}             

		\begin{keyword}                           
			Stochastic state-space models; belief inference; latent structure; structural adaptation; uncertainty-aware estimation.
		\end{keyword}                             

		\begin{abstract}                          
			Deep stochastic state-space models enable Bayesian filtering in nonlinear,
			partially observed systems but typically assume a fixed latent structure.
			When this assumption is violated, parameter adaptation alone may result in
			persistent belief inconsistency.
			We introduce \emph{Cognitive Flexibility} (CF) as a representation-level
			operator that selects latent structures online via an innovation--based
			predictive score, while preserving the Bayesian filtering recursion.
			Structural mismatch is formalized as irreducible predictive inconsistency
			under fixed structure. The resulting belief--structure recursion is shown
			to be well posed, to exhibit a structural descent property, and to admit
			finite switching, with reduction to standard Bayesian filtering under
			correct specification.
			Experiments on latent-dynamics mismatch, observation-structure shifts, and
			well-specified regimes confirm that CF improves predictive accuracy under a
			mismatch while remaining non-intrusive when the model is correctly specified.
		\end{abstract}
		
	\end{frontmatter}
	\section{Introduction}

Modern learning-enabled control systems~\cite{11312267,8103164} increasingly operate in environments
	where the relationship between system states, observations, and inputs is not
	fixed, but evolves over time.
	Such evolution arises in many physical systems~\cite{Lusch2018UniversalLinearEmbeddings} due to changes in sensing
	modalities, operating regimes~\cite{10551376}, task semantics, or interaction conditions, and is
	particularly pronounced in systems with compliant dynamics~\cite{Nuchkrua2022TIE} or strong
	environmental coupling
	\cite{Derler2012CPS,Lilge2022ContinuumStateEstimation}.
	When these changes occur, a model that is locally accurate can become globally
	misaligned with the true data-generating process, leading to persistent
	prediction errors and degraded closed-loop performance—even when classical
	parameter adaptation or robustification techniques are employed
	\cite{Ljung1999system,Qin2003Survey}.
	Understanding how to reason about and respond to such
	\emph{structural nonstationarity} is therefore central to reliable control and
	decision-making under uncertainty.

 In general, uncertainty in control and decision-making is addressed by assuming a
	\emph{fixed} model structure and compensating for mismatch through parameter
	adaptation, robust control design, or stochastic noise modeling
	\cite{jazwinski1970stochastic,maybeck1979stochastic,Rawlings2017MPC}.
	Under this paradigm, control and prediction are carried out with respect to a
	\emph{state belief}—the inferred distribution over latent states given available
	measurements—rather than the true, unobserved system state~\cite{Kaelbling1998POMDP}.
	Bayesian state estimation  \cite{Thrun2005ProbabilisticRobotics} then provides a coherent mechanism for the time
	evolution of this belief and forms the backbone of learning-enabled
	control.

However, when the assumed latent structure itself is incorrect, these mechanisms
are fundamentally limited: the resulting belief can remain numerically
well-defined while becoming systematically inconsistent with the true system
behavior.
This phenomenon—here termed \emph{structural mismatch}—cannot be eliminated by
parameter updates alone and constitutes an intrinsic failure mode of fixed
representation models.
Despite its practical relevance across robotics, autonomous systems
and learning--based control, structural mismatch has
received limited formal treatment \emph{at the level of Bayesian belief
	evolution itself} (i.e.,~\cite{Hewing2020,Becker2021,curi2020efficientmodelbasedreinforcementlearning}).

In recent years, data-driven modeling has significantly extended the classical
state-space model (SSM) framework~\cite{11107969}.
In particular, \emph{Deep Stochastic State-Space Models} (DeepSSSMs)~\cite{gedon2021deepssm,lin2024deeplearningbasedapproachesstate} combine
Bayesian filtering with expressive nonlinear representations learned from data,
\emph{enabling state estimation and prediction} in complex and high-dimensional
systems, including vision--based and latent-dynamics models for planning and
control \cite{krishnan2015deep,Fraccaro2017,Karl2017,Hafner2019,durkan2020neural}.
Beyond their origins in sequence modeling, deep state-space formulations have
increasingly been adopted in system identification and control-oriented
modeling, including neural state-space architectures, encoder--based
identification pipelines, and stochastic latent models for learning--based
control \cite{gedon2021deepssm,forgione2021dynonet,beintema2021deepenc,Becker2021,Soloperto2023BayesianActuation,lin2024review}.
Despite this progress, most DeepSSSM formulations retain a key assumption
inherited from classical models: the \emph{latent structure} of the state-space
model is fixed throughout operation.

This fixed-structure assumption becomes restrictive precisely in the regimes
where learned models are most attractive: deployment under changing sensing and
interaction conditions, and operation beyond the training distribution~\cite{Ovadia2019CanYouTrust}.
In practice, the relationship between latent states and observations may change
due to sensor degradation, environmental variation, unmodeled operating
regimes, or shifts in task semantics (i.e.,~\cite{diehl2022umbrellauncertaintyawaremodelbasedoffline}).
When such changes occur, parameter adaptation within a fixed latent
representation is often insufficient: the Bayesian belief can remain numerically
well-defined while becoming systematically misaligned with the true
data-generating process, producing persistent prediction errors and degraded
closed-loop performance \cite{Ljung1999system,Slotine1991,Ioannou1996}.
This issue is particularly acute in settings where uncertainty quantification,
risk sensitivity, and reliability are central to safe decision-making
\cite{Aswani2013,Chow2018,Berkenkamp2021SafeRL,Thananjeyan2021Safety}.

The need to address model mismatch and nonstationarity has long been recognized
in control and estimation~\cite{10077740,pnas2218197120,Pillonetto2022}.
Classical approaches include adaptive observers~\cite{besancon2007nonlinear}, gain scheduling~\cite{oliveira2025conservativeadaptivegainschedulingcontrol}, and
multiple-model estimation \cite{barshalom2001estimation,narendra1989stable}.
Interacting multiple-model (IMM) filters and hybrid observers~\cite{BALLUCHI2013915,KONG2021109752} allow transitions
among a finite set of pre-specified structures and admit strong theoretical
guarantees when the relevant operating regimes can be identified \emph{a priori}
\cite{blom1988interacting,barshalom2001estimation,Lavaei2022SurveySHS,10638633}.
These methods clarify an important point: \emph{structural change can be handled},
but typically only when one can enumerate the ``right'' modes in advance and
maintain mode-consistent filtering models.

In many contemporary data-driven settings, however, the enumeration assumption
underlying classical hybrid and multiple-model approaches is difficult to sustain.
Structural mismatch may not be well captured by a small, fixed bank of candidate
models, and learned latent representations can fail in ways that are not easily
diagnosed by standard residual analysis or noise inflation.
Recent work has therefore explored learning-enhanced filtering pipelines
\cite{revach2022kalmannet}, meta-learning strategies
\cite{chakrabarty2023metalearning,McClement2022MetaRL}, and cross-task
generalization \cite{lin2024review}.
While these approaches substantially expand representational capacity, they leave
open a system-theoretic question that is central to reliability:
\emph{how should Bayesian belief evolution respond when the latent representation
	itself becomes restrictive?}

We introduce \emph{Cognitive Flexibility} (CF)
	\cite{Scott1962CognitiveFlexibility,Collins2013} as a belief-level mechanism for
	\emph{structural reorganization} in DeepSSSMs.
	CF is formulated as an operator that selects which latent representation governs
	belief evolution at a given time.
	For any fixed structure, the underlying Bayesian filtering recursion is left
	unchanged; CF acts solely by enabling controlled transitions among
	representations when persistent belief inconsistency indicates that the current
	structure has become restrictive.
	As a result, representation adaptation is made explicit and analyzable, while
	preserving the probabilistic well-posedness of belief evolution.

Accordingly, CF is not an estimation heuristic but a
	representation-level \emph{control variable} governing belief evolution under
	structural nonstationarity, operating over a predefined family of latent
	structures rather than synthesizing new representations online.

From a system-theoretic perspective, this formulation raises three questions not
explicitly addressed by existing DeepSSSM or hybrid-estimation frameworks:
(i) how to characterize structural mismatch as an intrinsic limitation of fixed
latent representations;
(ii) how to model representation reorganization as an operator that interacts
with, rather than replaces, Bayesian filtering; and
(iii) under what conditions online structural adaptation can improve predictive
consistency while remaining controlled and well posed.

\textbf{Contributions.}
This paper advances a belief-level perspective on representation adaptation and
its system-theoretic implications.
The main contributions are as follows.

\textbf{(i) Structural mismatch as a fundamental estimation failure mode.}
We formalize \emph{structural mismatch} as an irreducible divergence between the
true conditional state distribution and the posterior belief induced by any
\emph{fixed} latent structure.
This characterization identifies a class of estimation errors that cannot be
eliminated by parameter adaptation, robustification, or noise modeling alone
\cite{Ljung1999system,narendra1989stable,Ioannou1996}.

\textbf{(ii) Cognitive Flexibility as a belief-level structural operator.}
We introduce \emph{Cognitive Flexibility} (CF) as a latent structural operator
coupled directly to Bayesian filtering recursions.
In contrast to classical and learning--based state--space models that assume a
fixed latent representation and adapt only through parameter updates
\cite{jazwinski1970stochastic,maybeck1979stochastic,Fraccaro2017,Karl2017,Hafner2019,gedon2021deepssm},
CF enables regulated transitions across latent structures. 

\textbf{(iii) System-theoretic properties of adaptive belief evolution.}
We establish fundamental properties of the resulting belief--structure dynamics,
including invariance of the belief space, monotone innovation--based structural
improvement, finite switching under persistent score separation, and reduction
to standard Bayesian filtering under correct structural specification.
These results complement classical multiple-model and hybrid estimation
frameworks \cite{blom1988interacting,barshalom2001estimation} by providing a
belief-level characterization of representation reorganization and clarifying
when structural adaptation is beneficial versus non-intrusive.

Numerical experiments demonstrate recovery from latent-dynamics mismatch,
adaptation under observation-structure shifts, and non-intrusiveness in
well-specified regimes.

\noindent\textbf{Relevance to control.}
The belief $\mathfrak{B}_t$ produced by the CF--augmented
	filter serves directly as the information state for
	belief-space control laws~\cite{jazwinski1970stochastic,Bertsekas2005v2},
	including MPC schemes that plan over the predictive
	distribution~\cite{Hewing2020}.
	Structural mismatch---the failure mode formalized in
	Theorem~10---propagates directly to control performance:
	a misspecified belief inflates uncertainty estimates,
	induces overly conservative constraint tightening, and
	degrades closed-loop tracking.
	CF addresses this failure at the belief level,
	before it reaches the control layer.
	A companion paper~\cite{Nuchkrua2026lcss} develops the
	corresponding robust CF theory for noisy innovation scores,
	connecting the present estimation framework to practical
	control implementations.  

The remainder of the paper is organized as follows.
Section~\ref{sec:formulation} introduces the problem formulation and belief
representation.
Section~\ref{sec:CFDeeepSSSM} presents the CF framework as a structural operator
on the belief space.
Sections~\ref{sec:layer1}--\ref{sec:layer3} analyze well-posedness, structural
descent, finite switching, and long-run behavior.
Section~\ref{sec:experiments} reports numerical studies, and
Section~\ref{sec:conclusion} concludes with implications and future directions.

\subsection{Notation}
All random variables are defined on a complete
probability space $(\Omega,\mathcal{F},\mathbb{P})$.
Time is discrete with
$t\in\mathbb{N}:=\{0,1,2,\dots\}$.
Let $u_t\in\mathcal{U}$ denote a known input and
$y_t\in\mathcal{Y}$ the corresponding measurement.
The latent state, observation, and input processes
are $\{z_t\}_{t\ge0}$, $\{y_t\}_{t\ge0}$,
and $\{u_t\}_{t\ge0}$.
Process and measurement noises satisfy
$w_t \sim \mathcal{W}(\cdot\mid z_t,u_t)$ and
$v_t \sim \mathcal{V}(\cdot\mid z_t)$
with variances $\sigma_w^2$ and $\sigma_v^2$.
Let $\mathcal{Z}$ be a Polish space and
$\mathcal{P}(\mathcal{Z})$ the set of Borel
probability measures on $\mathcal{Z}$.
If $\mu\in\mathcal{P}(\mathcal{Z})$ admits a density,
we identify $\mu$ with its density.
Expectation under $\mu$ is $\mathbb{E}_{\mu}[\cdot]$,
and $D_{\mathcal{KL}}(\mu\|\nu)$ denotes the
Kullback--Leibler divergence.
The information $\sigma$-algebra at time $t$ is
$\mathcal{I}_t := \sigma(y_{1:t},u_{1:t-1})$.
The posterior belief is
$\mathfrak{B}_t \in \mathcal{P}(\mathcal{Z})$.
A latent structure is indexed by $s\in\mathcal{S}$,
where $\mathcal{S}$ is finite; the active structure
$s_t\in\mathcal{S}$ is a deterministic function
of $\mathcal{I}_t$.
Let $\theta \in \Theta \subset \mathbb{R}^p$
denote a parameter vector.
The innovation likelihood is
$\ell_{\theta,s}(y_{t+1}\mid\mathfrak{B}_t,u_t)
:= \int p_{\theta,s}(y_{t+1}\mid z)\,
(\mathcal{P}_{\theta,s}\mathfrak{B}_t)(dz)$.
Let $\mathcal{F}_{\theta}:
\mathcal{P}(\mathcal{Z})\times\mathcal{U}
\times\mathcal{Y}\to\mathcal{P}(\mathcal{Z})$
denote the Bayesian filtering operator and
$\mathcal{F}_{\theta,s}$ its restriction to
structure $s$.
The constant $\gamma\in(0,1]$ denotes a structural
separation parameter.
	
\section{Preliminaries and Problem Formulation}
\label{sec:Problem_formulation}

We consider discrete-time state estimation under
partial observations, where both the state evolution
and observation process are subject to stochastic
disturbances and may change over time. The central
challenge is that no single fixed model may
consistently describe the system behavior across
all operating conditions — a limitation that
motivates the CF framework developed below.

\subsection{Preliminaries}
\label{sec:preliminaries}

The physical process is described abstractly as
\begin{align}
	z_{t+1} &= f(z_t, u_t, w_t),
	\label{eq:plant_state}\\
	y_t     &= h(z_t, v_t),
	\label{eq:plant_obs}
\end{align}
where $f:\mathcal{Z}\times\mathcal{U}\to\mathcal{Z}$
and $h:\mathcal{Z}\to\mathcal{Y}$ are unknown and
possibly time-varying, reflecting modeling uncertainty
and changes in operating conditions.
The CF framework developed here complements a
companion control
application~\cite{nuchkrua2026cognitiveflexiblecontrollatentmodel},
in which CF governs belief evolution within a
predictive safety control architecture.

\begin{rem}[Modeling scope]
	\label{rem:remark1}
	We do not assume that $(f, h)$
	in~\eqref{eq:plant_state}--\eqref{eq:plant_obs}
	belong to any prescribed model class.
	In particular, we do not impose $f \in \mathcal{F}_0$
	and $h \in \mathcal{H}_0$ for given hypothesis classes
	\begin{equation*}
		\mathcal{F}_0 \subset
		\{f:\mathcal{Z}\times\mathcal{U}\to\mathcal{Z}\},
		\qquad
		\mathcal{H}_0 \subset
		\{h:\mathcal{Z}\to\mathcal{Y}\}.
	\end{equation*}
	The data-generating mechanism may satisfy
	$(f,h)\notin\mathcal{F}_0\times\mathcal{H}_0$,
	inducing \emph{structural mismatch}: inference
	is performed under a misspecified model class, so
	that even optimal parameter adaptation within
	$\mathcal{F}_0\times\mathcal{H}_0$ cannot restore
	predictive consistency, resulting in persistent
	estimation
	error~\cite{Ljung1999system,2025MisspecifiedMLE}.
\end{rem}

\subsection{Problem formulation}
\label{sec:formulation}

Rather than committing to a potentially misspecified
structural model in~\eqref{eq:plant_state}--%
\eqref{eq:plant_obs}, we formulate inference
directly at the level of conditional probability
laws~\cite{jazwinski1970stochastic,anderson1979optimal}.
The following development is necessarily detailed
because the latent structure $s$ enters at three
distinct levels — the model class, the filtering
operator, and the belief trajectory — each of which
must be distinguished to state the main results of
Section~\ref{sec:CFDeeepSSSM} precisely.
The central object is the \emph{posterior belief}
\begin{equation}
	\mathfrak{B}_t(\cdot)
	:= \mathbb{P}\!\left(z_t \in \cdot \mid
	\mathcal{I}_t\right)
	\in \mathcal{P}(\mathcal{Z}),
	\label{eq:belief_space}
\end{equation}
i.e., the conditional law of $z_t$ given
$\mathcal{I}_t := \sigma(y_{1:t}, u_{1:t-1})$.
The belief $\mathfrak{B}_t$ is a sufficient statistic
for Bayesian state estimation~\cite{maybeck1979stochastic}:
all inference about $z_t$ conditioned on
$\mathcal{I}_t$ can be expressed through
$\mathfrak{B}_t$, which absorbs uncertainty from
$u_t$, $w_t$, and $v_t$
in~\eqref{eq:plant_state}--\eqref{eq:plant_obs}.
In particular, $\mathfrak{B}_t$ is an information
state: any conditional quantity of interest —
state predictions, conditional expectations, or
control-relevant functionals
$J:\mathcal{P}(\mathcal{Z})\to\mathbb{R}$ —
depends on $(y_{1:t}, u_{1:t-1})$ only through
$\mathfrak{B}_t$~\cite{jazwinski1970stochastic,anderson1979optimal}.
When $\mathbb{P}(z_t\in\cdot\mid y_{1:t},u_{1:t-1})$
admits a Lebesgue density, $\mathfrak{B}_t$ takes
the pointwise form
\begin{equation}
	\mathfrak{B}_t(z)
	= p(z_t = z \mid y_{1:t},\,u_{1:t-1}),
	\label{eq:belief_def}
\end{equation}
which we use interchangeably with the
measure-valued formulation~\eqref{eq:belief_space}.

In the DeepSSSM
framework~\cite{nuchkrua2026cognitiveflexiblecontrollatentmodel},
the abstract maps $(f, h)$
in~\eqref{eq:plant_state}--\eqref{eq:plant_obs}
are not identified directly.
Although the notation follows this framework,
the results of Section~\ref{sec:CFDeeepSSSM}
apply to any parameterised Bayesian filter of
the form~\eqref{eq:filter_map_s}, independently
of the specific architecture used to
represent $p_\theta$.
Instead, as noted in Remark~\ref{rem:remark1},
their effect on belief evolution is captured
through a parameterised family of conditional
distributions:
\begin{align}
	z_{t+1} &\sim p_{\theta}(z_{t+1}\mid z_t, u_t),
	\label{eq:model_trans}\\
	y_t      &\sim p_{\theta}(y_t \mid z_t),
	\label{eq:model_emis}
\end{align}
where $\theta$ is learned from data.
The model class~\eqref{eq:model_trans}--%
\eqref{eq:model_emis} induces a Bayesian filtering
recursion on $\mathcal{P}(\mathcal{Z})$,
\begin{equation}
	\underbrace{\mathfrak{B}_{t+1}}_{\text{updated belief}}
	=
	\underbrace{\mathcal{F}_{\theta}}_{\text{filtering operator}}
	\!\Big(
	\underbrace{\mathfrak{B}_t}_{\text{current belief}},\,
	\underbrace{u_t,\,y_{t+1}}_{\text{data}}
	\Big),
	\label{eq:filter_map}
\end{equation}
where $\mathcal{F}_\theta : \mathcal{P}(\mathcal{Z})
\times \mathcal{U} \times \mathcal{Y}
\to \mathcal{P}(\mathcal{Z})$ is the standard
Bayesian filtering
operator~\cite{maybeck1979stochastic}.
For fixed $\theta$, \eqref{eq:filter_map} defines
a deterministic dynamical system on
$\mathcal{P}(\mathcal{Z})$, driven by
$(u_t, y_{t+1})$.

Equation~\eqref{eq:filter_map} implicitly assumes
a fixed model structure: inference adapts only
the parameterisation $\theta$ within a prescribed
model class. This assumption breaks down when
$\mathfrak{B}_t$ also depends on a
\emph{latent structure} $s\in\mathcal{S}$ that
specifies the model class
itself.\footnote{A constructive realization and
	examples of $\mathcal{S}$ are developed
	in~\cite{nuchkrua2026cognitiveflexiblecontrollatentmodel}.}
Formally, for each $s\in\mathcal{S}$,
\begin{equation*}
	\mathcal{Z}_s \subseteq \mathcal{Z},\quad
	p_{\theta,s}: \mathcal{Z}_s \times \mathcal{U}
	\to \mathcal{P}(\mathcal{Z}_s),\quad
	q_{\theta,s}: \mathcal{Z}_s \to
	\mathcal{P}(\mathcal{Y}),
\end{equation*}
with $z_{t+1}\mid z_t,u_t \sim
p_{\theta,s}(\cdot\mid z_t,u_t)$ and
$y_t\mid z_t \sim q_{\theta,s}(\cdot\mid z_t)$,
leading to structure-dependent belief dynamics.

Remark~\ref{rem:remark1} identifies the
possibility of structural mismatch at the level
of $(f,h)$; the following definition makes this
precise at the level of the filtering operator
by restricting $\mathcal{F}_\theta$ to the model
class induced by a fixed $s\in\mathcal{S}$.

\begin{defn}[Belief dynamics under structure $s$]
	\label{def:fixed_structure_dynamics}
	Under $s\in\mathcal{S}$, the belief evolves via
	$\mathcal{F}_\theta$ restricted to the model
	class induced by $s$:
	\begin{equation}
		\mathfrak{B}_{t+1}
		=
		\underbrace{\mathcal{F}_{\theta,s}}_{\substack{
				\text{structure-restricted}\\
				\text{filter of }\mathcal{F}_{\theta}}}
		\!\Big(\mathfrak{B}_t, u_t, y_{t+1}\Big).
		\label{eq:filter_map_s}
	\end{equation}
\end{defn}

For a fixed $s\in\mathcal{S}$, the general
recursion~\eqref{eq:filter_map} thus reduces to
the structure-conditioned
update~\eqref{eq:filter_map_s}, restricting
inference to the associated model class.
The central difficulty arises when the true
latent dynamics in~\eqref{eq:model_trans} lie
outside this class: belief propagation
via~\eqref{eq:filter_map_s} remains well posed
but becomes misspecified, producing persistent
innovation errors and degraded predictive
performance. This is the regime of
\emph{structural mismatch} that CF is designed
to address.

\subsection{Problem Statement}
\label{sec:problem_statement}

The analysis of Section~\ref{sec:formulation} reveals
a fundamental limitation: when the true dynamics
lie outside the model class induced by any fixed
$s\in\mathcal{S}$, no parameter adaptation within
that class can restore predictive consistency.
This motivates a mechanism that treats the latent
structure $s_t$ as a degree of freedom to be selected
online, rather than a fixed modelling choice.

Specifically, the problem is to design an estimation
mechanism that jointly updates the belief
$\mathfrak{B}_t$ and the active structure
$s_t\in\mathcal{S}$ at each time step.
We consider a joint belief--structure recursion of
the form
\begin{equation}
	(\mathfrak{B}_t,\,s_t)
	\;\mapsto\;
	(\mathfrak{B}_{t+1},\,s_{t+1}),
	\label{eq:joint_update}
\end{equation}
where $\mathfrak{B}_{t+1}$ is propagated under the
selected structure $s_{t+1}$
via~\eqref{eq:filter_map_s}.
The key requirement is that the structural update
$s_t \mapsto s_{t+1}$ be driven by evidence of
predictive inconsistency — so that CF intervenes
only when the current structure has become
restrictive — while the Bayesian recursion
itself remains unchanged.
	
\section{Cognitive Flexibility as a Latent Structural 
	Operator}
\label{sec:CFDeeepSSSM}

Section~\ref{sec:Problem_formulation} establishes that
structural mismatch is an intrinsic limitation of
fixed-structure belief evolution: no parameter
adaptation within a fixed $s\in\mathcal{S}$ can restore
predictive consistency once the true dynamics lie outside
the induced model class.
Cognitive Flexibility (CF) resolves this by treating
$s_t$ as a representation-level variable updated online
alongside $\mathfrak{B}_t$, while leaving the Bayesian
recursion unchanged.
CF operates on the coupled state
$(\mathfrak{B}_t,s_t)\in\mathcal{P}(\mathcal{Z})
\times\mathcal{S}$ through two components:
belief evolution on $\mathcal{P}(\mathcal{Z})$ under
fixed $s$, and innovation-driven structural adaptation
on $\mathcal{S}$; see Fig.~\ref{fig:cf_latent_operator}.
The analysis proceeds in three layers:
well-posedness and fixed-structure limitations
(Section~\ref{sec:layer1}), the structural adaptation
mechanism (Section~\ref{sec:layer2}), and asymptotic
behavioral consequences (Section~\ref{sec:layer3}).

	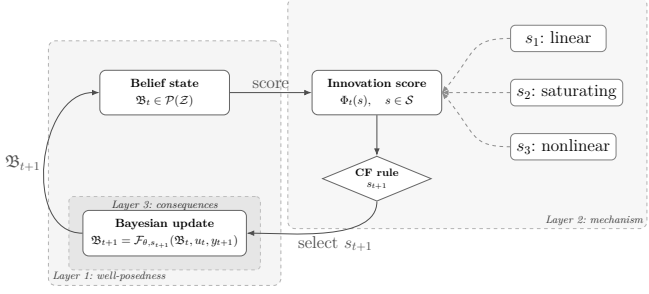
\begin{figure}[t]
		\centering
		\resizebox{0.48\textwidth}{!}{%
			\begin{tikzpicture}[
				font=\normalsize,
				>=Latex,
				node distance=12mm,
				box/.style={
					draw=black!60,
					rounded corners=5pt,
					align=center,
					inner sep=7pt,
					line width=0.6pt,
					fill=white,
					minimum width=38mm
				},
				sbox/.style={
					draw=black!50,
					rounded corners=4pt,
					align=center,
					inner sep=5pt,
					line width=0.5pt,
					fill=white,
					minimum width=28mm
				},
				decision/.style={
					draw=black!60,
					diamond,
					aspect=2.4,
					align=center,
					inner sep=2pt,
					line width=0.6pt,
					fill=white
				},
				arr/.style={
					->, line width=0.7pt, black!70
				},
				darr/.style={
					->, dashed, line width=0.5pt, black!45
				}
				]
				
				\node[box] (belief)
				{\textbf{Belief state}\\[3pt]
					$\mathfrak{B}_t \in \mathcal{P}(\mathcal{Z})$};
				
				\node[box, below=28mm of belief] (update)
				{\textbf{Bayesian update}\\[3pt]
					$\mathfrak{B}_{t+1} =
					\mathcal{F}_{\theta,s_{t+1}}
					(\mathfrak{B}_t,u_t,y_{t+1})$};
				
				\node[box, right=24mm of belief] (score)
				{\textbf{Innovation score}\\[3pt]
					$\Phi_t(s),\quad s\in\mathcal{S}$};
				
				\node[decision, below=12mm of score] (cf)
				{\small\textbf{CF rule}\\[-2pt]
					\small$s_{t+1}$};
				
				\node[sbox, right=20mm of score, text=black!100, 
				yshift= 16mm] (s1)
				{\large{$s_1$: linear}};
				
				\node[sbox, right=20mm of score] (s2)
				{\large{$s_2$: saturating}};
				
				\node[sbox, right=20mm of score,
				yshift=-16mm] (s3)
				{\large{$s_3$: nonlinear}};
				
				\draw[arr] (belief) --
				node[midway, above, font=\large]{score}
				(score);
				
				\draw[arr] (score) -- (cf);
				
				\draw[arr] (cf.south)
				.. controls +(0,-12mm) and +(12mm,0) ..
				node[pos=0.45, below, font=\large]
				{select $s_{t+1}$}
				(update.east);
				
				\draw[arr] (update.west)
				.. controls +(-18mm,0) and +(-18mm,0) ..
				node[pos=0.50, left, font=\large]
				{$\mathfrak{B}_{t+1}$}
				(belief.west);
				
				\draw[darr] (s1.west)
				.. controls +(-12mm,0) and +(10mm,12mm) ..
				(score.east);
				
				\draw[darr] (s2.west) -- (score.east);
				
				\draw[darr] (s3.west)
				.. controls +(-12mm,0) and +(10mm,-12mm) ..
				(score.east);
				
				\begin{scope}[on background layer]
					
					\node[draw=black!30, dashed,
					rounded corners=6pt,
					inner xsep=10mm, inner ysep=8.6mm,
					fill=black!4, line width=0.5pt,
					fit=(belief)(update),
					label={[font=\small\itshape,
						text=black!80,
						anchor=south west]
						south west:Layer 1: well-posedness}]
					(L1) {};
					
					\node[draw=black!30, dashed,
					rounded corners=6pt,
					inner xsep=7mm, inner ysep=8mm,
					fill=black!3, line width=0.5pt,
					fit=(score)(cf)(s1)(s2)(s3),
					label={[font=\small\itshape,
						text=black!80,
						anchor=south east]
						south east:Layer 2: mechanism}]
					(L2) {};
					
					\node[draw=black!25, dashed,
					rounded corners=4pt,
					inner xsep=4mm, inner ysep=4mm,
					fill=black!10, line width=0.4pt,
					fit=(update),
					label={[font=\small\itshape,
						text=black!85,
						anchor=north]
						north:Layer 3: consequences}]
					(L3) {};
					
				\end{scope}
				
			\end{tikzpicture}%
		}
		\caption{%
			The CF pipeline as a latent structural operator.
			At each step, the innovation scores
			$\{\Phi_t(s)\}_{s\in\mathcal{S}}$ are evaluated
			against the current belief $\mathfrak{B}_t$ and
			passed to the CF rule~\eqref{eq:cf_select},
			which selects $s_{t+1}$ and parameterises the
			Bayesian update~\eqref{eq:cf_belief_update}.
			Dashed regions correspond to the three analytical
			layers of Section~\ref{sec:CFDeeepSSSM}.
		}
		\label{fig:cf_latent_operator}
	\end{figure}


\begin{assum}[Fixed latent structure]
	\label{ass:fixed_structure}
	$\exists\,(\theta,s)\in\Theta\times\mathcal{S}\ 
	\text{s.t.}\ \forall t\ge0,\;
	(\theta_t,s_t)=(\theta,s).$
\end{assum}

Under Assumption~\ref{ass:fixed_structure},
\eqref{eq:filter_map_s} defines the baseline
fixed-structure belief dynamics
(cf.\ Definition~\ref{def:fixed_structure_dynamics})
on $\mathcal{P}(\mathcal{Z})$.
This assumption establishes the fixed-structure
baseline against which CF adaptation is measured;
it is relaxed by the structural selection rule
introduced below.

\begin{rem}[Nonlinearity of belief dynamics]
	The filtering operator $\mathcal{F}_{\theta,s}:\mathcal{P}(\mathcal{Z}) \times \mathcal{U}\times \mathcal{Y}\to \mathcal{P}(\mathcal{Z})$
	is nonlinear in its belief argument. In particular, for
	$\mathfrak{B}_1,\mathfrak{B}_2\in\mathcal{P}(\mathcal{Z})$ and
	$\alpha\in[0,1]$,
	$
	\mathcal{F}_{\theta,s}(\alpha \mathfrak{B}_1 + (1-\alpha)\mathfrak{B}_2,u_t,y_{t+1})
	\neq
	\alpha \mathcal{F}_{\theta,s}(\mathfrak{B}_1,u_t,y_{t+1})
	+
	(1-\alpha)\mathcal{F}_{\theta,s}(\mathfrak{B}_2,u_t,y_{t+1}).
	$
	Equivalently, $\mathcal{F}_{\theta,s}$ is not affine on $\mathcal{P}(\mathcal{Z})$, i.e.,
	$
	\mathcal{F}_{\theta,s}\notin \mathrm{Aff}\big(\mathcal{P}(\mathcal{Z})\big).
	$
\end{rem}	

	For each $(\theta,s)\in\Theta\times\mathcal{S}$, define the prediction operator

\begin{equation}
	\mathcal{P}_{\theta,s}(\mathfrak{B}_t,u_t)
	:=
	\int
	\underbrace{p_{\theta,s}(z^+\mid z,u_t)}_{\text{state transition density}}
	\;
	\underbrace{\mathfrak{B}_t(dz)}_{\text{current belief}}
	,
	\label{eq:innovation}
\end{equation}
which yields the one-step predictive belief under the transition model
specified by structure $s$.

The consistency of the predicted belief with an incoming observation $y_{t+1}$
is quantified by the innovation likelihood
\begin{equation}
	\ell_{\theta,s}(y_{t+1} \mid \mathfrak{B}_t, u_t)
	:=
	\int
	p_{\theta,s}(y_{t+1} \mid z)\,
	\underbrace{(\mathcal{P}_{\theta,s}\mathfrak{B}_t, u_t)}_{\text{prediction}}(dz),
	\label{eq:innovation_likelihood}
\end{equation}
which is the marginal likelihood of $y_{t+1}$ under 
$\mathcal{P}_{\theta,s}(\mathfrak{B}_t,u_t)$.

Under standard regularity conditions, the Bayesian correction step \cite{jazwinski1970stochastic,maybeck1979stochastic} is given by
\begin{equation}
	\mathfrak{B}_{t+1}(dz)
	=
	\frac{
		p_{\theta,s}(y_{t+1} \mid z)\,
		\mathcal{P}_{\theta,s}(\mathfrak{B}_t, u_t)(dz)
	}{
		\ell_{\theta,s}(y_{t+1} \mid \mathfrak{B}_t, u_t)
	}, 
	\label{eq:bayes_update}
\end{equation}
which, together with \eqref{eq:innovation_likelihood}, defines a nonlinear,
input-driven update
$
(\mathfrak{B}_t,u_t,y_{t+1})\;\mapsto\;\mathfrak{B}_t^+
\in \mathcal{P}(\mathcal{Z}).
$

For fixed $(\theta,s)$, this update fully determines the belief evolution from
$(\mathfrak{B}_t,u_t,y_{t+1})$.
Accordingly, we define the structural inconsistency score by 
\begin{equation}
	\Phi(\mathfrak{B}_t, s)
	:=
	- \log \ell_{\theta,s}(y_{t+1} \mid \mathfrak{B}_t, u_t),
	\label{eq:Phi_def}
\end{equation}
so that smaller values of $\Phi$ indicate better predictive alignment.


Crucially, $\mathfrak{B}_{t+1}$
in~\eqref{eq:bayes_update} may remain well posed
$\forall{t}$, i.e.,
$\mathfrak{B}_{t+1}=\mathcal{F}_{\theta,s}(\mathfrak{B}_t,u_t,y_{t+1})$
in~\eqref{eq:filter_map_s} is computable at each step, while the resulting
belief sequence $\{\mathfrak{B}_t\}$ fails to converge to
$\mathbb{P}^\star(\cdot\mid y_{1:t},u_{1:t-1})$.

\begin{defn}[Structural mismatch]
	\label{def:struct_mismatch}
	We call $s\in\mathcal{S}$ structurally mismatched if
	$
	\exists\,\varepsilon>0 \ \text{s.t.}\ 
	\forall\,\{\theta_t\}\subset\Theta,\ 
	\liminf_{t\to\infty}
	D_{\mathcal{KL}}\!\Big(
	\mathbb{P}^\star(\cdot\mid y_{1:t},u_{1:t-1})
	\,\Big\|\,\mathfrak{B}_t^{\theta_t,s}
	\Big)\ge\varepsilon.
	$
\end{defn}

Thus, adaptation within the fixed structure $s$ cannot eliminate the
asymptotic discrepancy with the true conditional law.

When $s_t$ is structurally mismatched in the sense of
Definition~\ref{def:struct_mismatch}, 
the structural update $s_{t+1}$\footnote{%
	The variable $s_{t+1}$ denotes the \emph{selected latent structure} at time $t+1$.
	It is a discrete structural index chosen deterministically from the finite set
	$\mathcal{S}$ based on the current belief $\mathfrak{B}_t$.
	It is not a random variable and is not part of the Bayesian state; rather, it
	indexes the observation/transition model under which the subsequent Bayesian
	belief update is performed.
} is given by

\begin{equation}
	s_{t+1} = \begin{cases}
		s_t, ~~~~~~~ \Phi(\mathfrak{B}_t, s_t) \leq \min_{s\in\mathcal{S}}\Phi(\mathfrak{B}_t,s) + \delta,\\
		\arg\min_{s\in\mathcal{S}} \Phi(\mathfrak{B}_t, s), ~~~~\text{otherwise,}
	\end{cases}
   \label{eq:cf_select}  
\end{equation}
where $\delta \geq 0$ is a hysteresis margin.
Setting $\delta = 0$ recovers the pure argmin rule.

The belief $\mathcal{B}_{t+1}$ in~\eqref{eq:joint_update} is then obtained via the structure-conditioned Bayesian filter in~\eqref{eq:filter_map_s}.

Algorithm~\ref{alg:cf_update} provides a constructive realization of
\eqref{eq:cf_select} and \eqref{eq:filter_map_s}.


	\begin{algorithm}[t]
	\footnotesize
	\caption{CF belief--structure update at time $t$:
		constructive realization of the CF selection rule~\eqref{eq:cf_select}
		and the coupled recursion~\eqref{eq:cf_struct_update}--\eqref{eq:cf_belief_update}}
	\label{alg:cf_update}
	\small
	\begin{algorithmic}[1]
		\Require Current belief--structure pair $(\mathfrak{B}_t,s_t)$, candidate set $\mathcal{S}$, input $u_t$, measurement $y_{t+1}$   \hfill \eqref{eq:plant_state}–\eqref{eq:plant_obs} and Definition~\ref{def:struct_mismatch}
		\Ensure Updated pair $(\mathfrak{B}_{t+1}, s_{t+1})$     \hfill  \eqref{eq:belief_space} and \eqref{eq:filter_map_s}
		
		\State \textbf{Score evaluation:}
		\ForAll{$s \in \mathcal{S}$}
		\State Compute $\Phi(\mathfrak{B}_t,s)$ \hfill \eqref{eq:Phi_def}
		\EndFor
		
		\State \textbf{Structure selection:}
		\State Select $s_{t+1} \in \arg\min_{s\in\mathcal{S}} \Phi(\mathfrak{B}_t,s)$
		according to \hfill \eqref{eq:cf_select}
		
		\State \textbf{Belief propagation:}
		\State $\mathfrak{B}_{t+1} \leftarrow
		\mathcal{F}_{\theta,s_{t+1}}(\mathfrak{B}_t,u_t,y_{t+1})$
		\hfill \eqref{eq:filter_map_s}
		
	\end{algorithmic}
\end{algorithm}

However, minimizers of \eqref{eq:cf_select} need not be unique, i.e.,
$\arg\min_{s \in \mathcal{S}} \Phi(\mathfrak{B}_t,s)$ is not a singleton.
Under structural mismatch (Definition~\ref{def:struct_mismatch}),
$\min_{s \in \mathcal{S}} \Phi(\mathfrak{B}_t,s) > 0$, and
$\exists\, s_1 \neq s_2 \in \mathcal{S}$ such that
$\Phi(\mathfrak{B}_t,s_1) = \Phi(\mathfrak{B}_t,s_2)
= \min_{s \in \mathcal{S}} \Phi(\mathfrak{B}_t,s)$,
so \eqref{eq:cf_select} admits multiple minimizers.

To obtain a well-defined recursion, we introduce a deterministic selection operator 
that resolves this ambiguity:
\begin{equation}
	\mathcal{T}_{\mathrm{CF}} :
	\mathcal{P}(\mathcal{Z}) \times \mathcal{S}
	\;\to\; \mathcal{S},
	\label{eq:TCF}
\end{equation}
which selects a unique element from the set of minimizers of \eqref{eq:cf_select}, 
i.e., $\mathcal{T}_{\mathrm{CF}}(\mathfrak{B}_t,s_t) \in 
\arg\min_{s \in \mathcal{S}} \Phi(\mathfrak{B}_t, s)$.

Accordingly, CF induces the coupled belief--structure recursion
\begin{align}
	s_{t+1} &= \mathcal{T}_{\mathrm{CF}}(\mathfrak{B}_t,s_t),
	\label{eq:cf_struct_update}\\
	\mathfrak{B}_{t+1} &=
	\mathcal{F}_{\theta,s_{t+1}}(\mathfrak{B}_t,u_t,y_{t+1}),
	\label{eq:cf_belief_update}
\end{align}
where $\mathcal{F}_{\theta,s}$ denotes the Bayesian filtering operator under
structure $s$ (cf.~\eqref{eq:filter_map_s}).
Together, \eqref{eq:cf_struct_update}--\eqref{eq:cf_belief_update}
define the closed-loop evolution of the CF-augmented inference system.



To formalize the requirement that CF mitigates persistent
structural inconsistency—such as that quantified by
Definition~\ref{def:struct_mismatch}—we introduce the following design
assumption.


\begin{assum}[Structural inconsistency functional]
	\label{ass:struct_improve}
	$	\exists\, \Phi:\mathcal{P}(\mathcal{Z}) \times \mathcal{S} \to \mathbb{R}_+ \ \text{s.t.} \
	\forall (\mathfrak{B},s),\ \Phi(\mathfrak{B},s) \ge 0,
	\quad
	\Phi(\mathfrak{B},s) = 0 \iff s \in \mathcal{S}^\star.$
\end{assum}


Practically, $\Phi$ can be constructed from predictive or innovation errors
evaluated under the model associated with $s$. 
Accordingly, the CF augmented inference mechanism induced by
\eqref{eq:cf_struct_update}--\eqref{eq:cf_belief_update} can be written as
$(\mathfrak{B}_t,s_t) \mapsto
(\mathcal{T}_{\mathrm{CF}}(\mathfrak{B}_t,s_t),
\mathcal{F}_{\theta,\mathcal{T}_{\mathrm{CF}}(\mathfrak{B}_t,s_t)}
(\mathfrak{B}_t,u_t,y_{t+1}))$.
In particular, \eqref{eq:cf_belief_update} remains Bayesian, whereas \emph{CF
	acts only through the structural update} \eqref{eq:cf_struct_update}.  
In a system-theoretic viewpoint, 
the operator
$\mathcal{T}_{\mathrm{CF}}$ in~\eqref{eq:cf_struct_update} enlarges the set of
admissible belief trajectories 
$
\mathfrak{B}_t \in \mathcal{R}(s)
:= \Big\{\mathcal{F}_{\theta,s}^{(t)}(\mathfrak{B}_0,u_{0:t-1},y_{1:t})
:\theta \in \Theta_s\Big\},
$
associated with 
$s \in \mathcal{S}$, to 
$
\mathfrak{B}_t \in \bigcup_{s_{0:t} \in \mathcal{S}^{t+1}}
\mathcal{R}(s_{0:t}),
$
where $\mathcal{R}(s_{0:t})$ denotes the set of belief trajectories generated
by the switching sequence $s_{0:t}$ under~\eqref{eq:cf_struct_update}--\eqref{eq:cf_belief_update}.
Thus, CF enables escape from regimes of structural mismatch, i.e.,
$
\inf_{\theta \in \Theta_s}
D_{\mathcal{KL}}\!\big(
\mathbb{P}^\star(\cdot\mid y_{1:t},u_{1:t-1})
\,\|\,\mathfrak{B}_t^{\theta,s}
\big) \;\ge\; \varepsilon > 0,  \forall s \in \mathcal{S}.
$

\begin{rem}[Constructive realization] 
	The innovation score~\eqref{eq:Phi_def}, namely 
	$\Phi(B_t,s) = -\log\ell_{\theta,s}(y_{t+1}|B_t,u_t)$,
	satisfies Assumption~\ref{ass:struct_improve} whenever the observation model
	$\{p_{\theta,s}(\cdot|\cdot)\}_{s\in\mathcal{S}}$ is
	identifiable, in the sense that
	$\ell_{\theta,s}(y|B,u) = \ell_{\theta,s^\star}(y|B,u)$
	a.s. implies $s = s^\star$.
	This follows from the strict positivity of the KL
	divergence: $D_\mathrm{KL}(p_{\theta,s^\star}\|p_{\theta,s}) > 0$
	for $s\neq s^\star$ under identifiability.
\end{rem}

\begin{prop}[Innovation $\&$ CF switching]
	\label{prop:cf_switch_consistency}
	Let Assumption~\ref{ass:struct_improve} hold. Define the innovation cost
	$
	c_t^{(s)} := -\log \ell_{\theta,s}(y_t \mid \mathfrak{B}_{t-1},u_{t-1}).
	$
	Assume that there exists $\delta > 0$ such that
	$
	\liminf_{t\to\infty}
	\Big(
	\frac{1}{t} \sum_{k=1}^t c_k^{(s)}
	-
	\inf_{s' \in \mathcal{S}} \frac{1}{t} \sum_{k=1}^t c_k^{(s')}
	\Big)
	\ge \delta,
	\quad \forall s \notin \mathcal{S}^\star.
	$
	Then, the CF selection rule~\eqref{eq:cf_struct_update} satisfies
	$
	\limsup_{t\to\infty} \mathbf{1}\{s_t \notin \mathcal{S}^\star\} = 0,
	$
	i.e., structural selections outside $\mathcal{S}^\star$ occur only finitely often.
\end{prop}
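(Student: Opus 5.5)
The plan is to prove the contrapositive. Suppose the set $\mathcal{K}:=\{t: s_t\notin\mathcal{S}^\star\}$ is infinite. We then show that the averaged innovation costs cannot be separated by $\delta$ for some $s\notin\mathcal{S}^\star$. The link between the rule~\eqref{eq:cf_struct_update} and the costs $c_k^{(s)}$ is immediate from the definitions. By~\eqref{eq:Phi_def}, $\Phi(\mathfrak{B}_{t-1},s)=c_t^{(s)}$. Hence $s_t=\mathcal{T}_{\mathrm{CF}}(\mathfrak{B}_{t-1},s_{t-1})\in\arg\min_{s}c_t^{(s)}$, so at every $t$
\[
c_t^{(s_t)}=\min_{s'\in\mathcal{S}}c_t^{(s')} .
\]
So a selection outside $\mathcal{S}^\star$ at time $t$ means that some $s\notin\mathcal{S}^\star$ attains the instantaneous minimum of the innovation cost.

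\textbf{Step 1 (pigeonhole).} Since $\mathcal{S}$ is finite, if $\mathcal{K}$ is infinite then some fixed $\bar s\notin\mathcal{S}^\star$ satisfies $s_t=\bar s$ for infinitely many $t$. Call this index set $\mathcal{K}_{\bar s}$.

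\textbf{Step 2 (use Assumption~\ref{ass:struct_improve}).} For any $s^\star\in\mathcal{S}^\star$, Assumption~\ref{ass:struct_improve} gives $\Phi(\cdot,s^\star)=0$, i.e.\ $c_k^{(s^\star)}=0$ for all $k$. Every other cost satisfies $c_k^{(s)}\ge0$. Therefore
\[
\inf_{s'}\tfrac1t\textstyle\sum_{k\le t}c_k^{(s')}=0 ,
\]
and the minimizing set at each step is exactly $\{s:\Phi(\mathfrak{B}_{t-1},s)=0\}$. Applying the same assumption to $\bar s\notin\mathcal{S}^\star$ gives $c_t^{(\bar s)}>0$ for every $t$, which contradicts $\bar s$ being a minimizer at any $t\in\mathcal{K}_{\bar s}$. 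So under Assumption~\ref{ass:struct_improve} as stated, selections outside $\mathcal{S}^\star$ in fact never occur.

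\textbf{Step 3 (averaged route, fallback).} A cleaner argument that actually uses the $\delta$-separation hypothesis is to treat Assumption~\ref{ass:struct_improve} only as identifying $\mathcal{S}^\star$. We then argue through the Cesàro averages. Set $A_t(s):=\frac1t\sum_{k\le t}c_k^{(s)}-\inf_{s'}\frac1t\sum_{k\le t}c_k^{(s')}$. The hypothesis gives $T$ such that $A_t(s)\ge\delta/2$ for all $t\ge T$ and all $s\notin\mathcal{S}^\star$. This uses finiteness of $\mathcal{S}$ to take a uniform $T$.

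The difficulty is converting the averaged separation into instantaneous separation $c_t^{(\bar s)}>\min_{s'}c_t^{(s')}$ along $\mathcal{K}_{\bar s}$. An averaged gap does not forbid infinitely many instantaneous ties or reversals. I would first try to close this with Step 2, since the zero-cost characterization of $\mathcal{S}^\star$ makes the instantaneous comparison exact.

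If one insists on the purely averaged version, the fallback is to interpret $\mathcal{T}_{\mathrm{CF}}$ with respect to the accumulated score $\sum_{k\le t}c_k^{(s)}$. This is the natural reading of "persistent score separation". Then $s_t=\bar s$ would require $A_t(\bar s)=0<\delta/2$ for $t\ge T$, a contradiction. Only the finitely many $t<T$ can have $s_t\notin\mathcal{S}^\star$, which gives $\limsup_t\mathbf{1}\{s_t\notin\mathcal{S}^\star\}=0$.

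I expect the main obstacle to be exactly this averaged-versus-instantaneous gap, together with pinning down which reading of the rule the argument depends on. I would state it explicitly in the proof.
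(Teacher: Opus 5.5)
Your proof is correct, but it takes a different route from the paper. The paper uses the Ces\`aro separation to find $T$ beyond which each $s\notin\mathcal{S}^\star$ has average cost at least $\delta$ above the best average. It then asserts, ``by asymptotic consistency of $\Phi(\mathfrak{B}_t,s)$ with $\{c_t^{(s)}\}$,'' that $s$ eventually drops out of $\arg\min_s\Phi(\mathfrak{B}_t,s)$.

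You observe instead that $\Phi(\mathfrak{B}_{t-1},s)=c_t^{(s)}$ exactly, so the rule minimizes the \emph{instantaneous} cost. This shows the paper's bridging step follows from nothing stated: absent the zero-cost property of Assumption~\ref{ass:struct_improve}, an averaged gap does not exclude infinitely many instantaneous reversals. For example, take $c_k^{(\bar s)}\equiv 1$, while $c_k^{(s^\star)}=5$ for $k\in 10\mathbb{N}$ and $0$ otherwise.

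Your Step 2 closes the argument by taking Assumption~\ref{ass:struct_improve} at face value for the innovation score. The paper itself does this in the constructive-realization remark and in the proof of Theorem~\ref{thm:wellposedness_cf}. The minimizers of $c_t^{(\cdot)}$ are then exactly $\mathcal{S}^\star$, so $s_t\in\mathcal{S}^\star$ for every $t\ge 1$. This buys rigor and a stronger conclusion, with no exceptional times at all. Your indexing is also cleaner than the paper's, which writes $s_t$ where $s_{t+1}=\mathcal{T}_{\mathrm{CF}}(\mathfrak{B}_t,s_t)$ is meant.

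The cost is that, under this reading, the proposition becomes essentially trivial. Under the decoupled reading, your accumulated-score fallback is a genuine change of rule, closer to the windowed scores used in the experiments. It plays the role of the paper's unstated consistency hypothesis.

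One detail to add to Step 2: it needs $\mathcal{S}^\star\neq\varnothing$, and the separation hypothesis supplies exactly this. If $\mathcal{S}^\star$ were empty, some structure would attain the minimal average for infinitely many $t$, forcing its $\liminf$ gap to be $0<\delta$.
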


\begin{pf}
	For any $s \notin \mathcal{S}^\star$, the separation condition implies
	$\exists\, \delta>0$ and $T<\infty$ such that $\forall t \ge T$,
	$
	\frac{1}{t}\sum_{k=1}^t c_k^{(s)}
	\;\ge\;
	\inf_{s'\in\mathcal{S}} \frac{1}{t}\sum_{k=1}^t c_k^{(s')}
	+ \delta.
	$
	By asymptotic consistency of $\Phi(\mathfrak{B}_t,s)$ with $\{c_t^{(s)}\}$,
	this yields
	$
	s \notin \arg\min_{s\in\mathcal{S}} \Phi(\mathfrak{B}_t,s),
	\quad \forall t \ge T.
	$
	Since $s_t \in \arg\min_{s} \Phi(\mathfrak{B}_t,s)$, it follows that
	$
	\mathbf{1}\{s_t \notin \mathcal{S}^\star\} = 0
	\quad \forall t \ge T,
	$
	hence $\limsup_{t\to\infty} \mathbf{1}\{s_t \notin \mathcal{S}^\star\}=0$.
\end{pf}


	The preceding results motivate a three-layer organization of the analysis,
aligned with the conceptual architecture as follow. 

\emph{Layer~1 (well-posedness and fixed-structure limitations).}
We first establish well-posedness of the structure-conditioned Bayesian recursion
$\mathfrak{B}_{t+1}=\mathcal{F}_{\theta,s}(\mathfrak{B}_t,u_t,y_{t+1})$
on $\mathcal{P}(\mathcal{Z})$ (Lemma~\ref{lem:belief_invariance}).
We then show that the coupled recursion
$\big(s_{t+1},\mathfrak{B}_{t+1}\big)$ given by
\eqref{eq:cf_struct_update}--\eqref{eq:cf_belief_update}
is well posed on $\mathcal{P}(\mathcal{Z})\times\mathcal{S}$, in the sense of
a unique forward-invariant trajectory for any input--output sequence
(Theorem~\ref{thm:wellposedness_cf}).
Next, for structurally mismatched $s$ in the sense of
Definition~\ref{def:struct_mismatch}, we show that no
(possibly time-varying) $\theta_t$ can restore asymptotic predictive
consistency within that fixed $s$ (Theorem~\ref{thm:nonparametric_mismatch}).
Finally, we show that allowing $s_{t+1}\neq s_t$ enlarges the set of attainable
one-step belief updates relative to any fixed $s\in\mathcal{S}$
(Theorem~\ref{thm:reachable_set}).

\emph{Layer~2 (mechanism-level guarantees for CF).}
We analyze the structural update
$s_{t+1}=\mathcal{T}_{\mathrm{CF}}(\mathfrak{B}_t,s_t)$.
We first establish a one-step descent property of the score
$\Phi(\mathfrak{B}_t,s)$ under \eqref{eq:cf_struct_update}
(Lemma~\ref{lem:struct_descent}).
We then show that persistent separation of $\Phi(\mathfrak{B}_t,s)$ implies
finite switching and eventual absorption into a single structure
(Lemma~\ref{lem:finite_switching}).
The coupled recursion
$(s_{t+1},\mathfrak{B}_{t+1})$
is interpreted as a hybrid dynamical system on
$\mathcal{P}(\mathcal{Z})\times\mathcal{S}$
(Proposition~\ref{prop:hybrid}).
Combining these results yields bounded $\{\mathfrak{B}_t\}$ and monotone
(and, under mismatch, strict) improvement of predictive consistency
(Theorem~\ref{thm:cf_boundedness}).

\emph{Layer~3 (behavioral consequences).}
We characterize $(s_t,\mathfrak{B}_t)$ asymptotically.
If $s_t\to s^\star\in\mathcal{S}^\star$, then
$s_{t+1}=s_t$ and
$\mathfrak{B}_{t+1}=\mathcal{F}_{\theta,s^\star}(\mathfrak{B}_t,u_t,y_{t+1})$
(Corollary~\ref{cor:eventual_fixed_filtering}).
If $s\in\mathcal{S}^\star$, then
$\mathcal{T}_{\mathrm{CF}}(\mathfrak{B}_t,s)=s$ eventually, i.e.,
no persistent switching (Corollary~\ref{cor:non_intrusive}).


	\subsection{Well-posedness (foundational, necessary)}
\label{sec:layer1}
\begin{lem}[Invariance of the belief space]
	\label{lem:belief_invariance}
	Fix a latent structure $s\in\mathcal{S}$ and parameters $\theta\in\Theta$.
	For any input $u_t\in\mathcal{U}$ and observation $y_{t+1}\in\mathcal{Y}$ such that
	$\ell_{\theta,s}(y\mid \mathfrak{B}_t,u_t)>0$, the structure-conditioned filtering map
	$\mathcal{F}_{\theta,s}$ defined in~\eqref{eq:filter_map_s} satisfies
	$
	\mathcal{F}_{\theta,s}(\mathfrak{B}_t,u_t,y_{t+1})\in\mathcal{P}(\mathcal{Z}),
	\qquad \forall\,\mathfrak{B}_t\in\mathcal{P}(\mathcal{Z}).
	$
\end{lem}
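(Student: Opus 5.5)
The plan is to verify directly that the two-stage map (prediction followed by Bayesian correction) sends a probability measure to a probability measure, under the standing measurability conventions: $p_{\theta,s}(\cdot\mid z,u)$ is a Markov kernel on $\mathcal{Z}$ and $z\mapsto p_{\theta,s}(y\mid z)$ is a nonnegative measurable likelihood.

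\textbf{Step 1: the prediction step.} Show that the prediction operator $\mathcal{P}_{\theta,s}(\mathfrak{B}_t,u_t)$ in~\eqref{eq:innovation} is an element of $\mathcal{P}(\mathcal{Z})$. For each Borel set $A$, set
\[
\mathcal{P}_{\theta,s}(\mathfrak{B}_t,u_t)(A)=\int p_{\theta,s}(A\mid z,u_t)\,\mathfrak{B}_t(dz).
\]
This quantity is well defined because $z\mapsto p_{\theta,s}(A\mid z,u_t)$ is measurable and bounded in $[0,1]$. Countable additivity follows from monotone convergence applied to disjoint unions. Total mass equals one because $p_{\theta,s}(\mathcal{Z}\mid z,u_t)=1$ for every $z$ and $\mathfrak{B}_t(\mathcal{Z})=1$. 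If the kernel lives on $\mathcal{Z}_s\subseteq\mathcal{Z}$, extend it by zero outside $\mathcal{Z}_s$; the same argument then gives a measure on $\mathcal{Z}$ supported in $\mathcal{Z}_s$.

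\textbf{Step 2: the correction step.} Define the unnormalized measure $\nu(A):=\int_A p_{\theta,s}(y_{t+1}\mid z)\,\mathcal{P}_{\theta,s}(\mathfrak{B}_t,u_t)(dz)$. It is a nonnegative measure, again by monotone convergence. Its total mass is $\nu(\mathcal{Z})=\ell_{\theta,s}(y_{t+1}\mid\mathfrak{B}_t,u_t)$ by~\eqref{eq:innovation_likelihood}.

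\textbf{Main obstacle: finiteness of the normalizer.} The normalizer must lie in $(0,\infty)$. Positivity is exactly the hypothesis of the lemma. Finiteness is not stated explicitly, and I would obtain it in one of two ways:
\begin{itemize}
\item from the implicit requirement that $\ell_{\theta,s}$ be a (finite) marginal likelihood; or
\item from boundedness of the emission density in $z$, which holds for the Gaussian or otherwise bounded densities used in DeepSSSMs.
\end{itemize}
I would state whichever one is used as the ``standard regularity condition'' invoked before~\eqref{eq:bayes_update}. The fallback is to note that $\ell<\infty$ holds for $\lambda$-a.e.\ $y$, since $\int\ell\,dy=1$.

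\textbf{Step 3: conclusion.} With $0<\nu(\mathcal{Z})<\infty$, the measure $\mathfrak{B}_{t+1}=\nu/\nu(\mathcal{Z})$ given by~\eqref{eq:bayes_update} is a nonnegative, countably additive set function of total mass one on the Borel sets of $\mathcal{Z}$. Hence $\mathcal{F}_{\theta,s}(\mathfrak{B}_t,u_t,y_{t+1})\in\mathcal{P}(\mathcal{Z})$. Since $\mathfrak{B}_t\in\mathcal{P}(\mathcal{Z})$ was arbitrary, this establishes the claimed invariance. By induction, the same conclusion holds along any trajectory for which the innovation likelihood stays positive.
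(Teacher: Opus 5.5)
Your proposal is correct and follows essentially the paper's route: view $\mathcal{F}_{\theta,s}(\mathfrak{B}_t,u_t,y_{t+1})$ as the reweighting of the prediction measure $\mathcal{P}_{\theta,s}(\mathfrak{B}_t,u_t)$ by the nonnegative likelihood $p_{\theta,s}(y_{t+1}\mid z)$, normalise by $\ell_{\theta,s}>0$, and read off nonnegativity and unit mass. Your Step~1 and your treatment of the normaliser make explicit what the paper takes for granted; in particular, the paper's ``$=1$'' silently requires $\ell_{\theta,s}<\infty$, so stating that as the standard regularity condition (e.g.\ via a bounded emission density) is the right way to close the argument.
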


\begin{pf}
	
	Fix $\mathfrak{B}\in\mathcal{P}(\mathcal{Z})$ and define
	$\mathfrak{B}_t^+ := \mathcal{F}_{\theta,s}(\mathfrak{B}_t,u_t,y_{t+1})$.
	By the Bayesian update~\eqref{eq:bayes_update}, $\mathfrak{B}_t^+$ is obtained
	by absolutely continuous reweighting of the prediction measure
	$\mathcal{P}_{\theta,s}(\mathfrak{B}_t,u_t)\in\mathcal{P}(\mathcal{Z})$
	with respect to the likelihood $p_{\theta,s}(y\mid z)$, followed by normalization
	via the innovation likelihood
	$\ell_{\theta,s}(y_t\mid \mathfrak{B}_t,u_t)$ defined in
	\eqref{eq:innovation_likelihood}.
	Since $p_{\theta,s}(y\mid z)\ge 0$, $\forall{z}\in\mathcal{Z}$ and
	$\ell_{\theta,s}(y\mid \mathfrak{B}_t,u_t)>0$, the resulting measure
	$\mathfrak{B}^+$ is nonnegative.
	Moreover, rewriting~\eqref{eq:innovation_likelihood} yields
	$
		\int_{\mathcal{Z}} \mathfrak{B}_t^+(dz)
		=
		\frac{1}{\underbrace{\ell_{\theta,s}(y\mid \mathfrak{B}_t,u_t)}_{\text{innovation likelihood}}}
		\int_{\mathcal{Z}}
		\underbrace{p_{\theta,s}(y\mid z)}_{\text{likelihood}}
		\;
		\underbrace{\mathcal{P}_{\theta,s}(\mathfrak{B}_t,u_t)(dz)}_{\text{prediction measure}}
		=
		1.
		$
	Hence $\mathfrak{B}_t^+$ is normalized and therefore belongs to
	$\mathcal{P}(\mathcal{Z})$.
	This establishes invariance of the belief space under the Bayesian
	filtering recursion, cf.~\cite{jazwinski1970stochastic,maybeck1979stochastic}.
\end{pf}

\begin{thm}[Well-posedness]
	\label{thm:wellposedness_cf}
	Suppose Assumptions~\ref{ass:fixed_structure}--\ref{ass:struct_improve} hold.
	Let $(\mathfrak{B}_0,s_0)\in\mathcal{P}(\mathcal{Z})\times\mathcal{S}$.
	Then, for any input--output sequence
	$\{(u_t,y_{t+1})\}_{t\ge0}$, the CF selection rule
	\eqref{eq:cf_select} together with the coupled recursion
	\eqref{eq:cf_struct_update}--\eqref{eq:cf_belief_update}
	generates a unique sequence
	$\{(\mathfrak{B}_t,s_t)\}_{t\ge0}$ satisfying
	$
	(\mathfrak{B}_t,s_t)\in\mathcal{P}(\mathcal{Z})\times\mathcal{S},
	\qquad \forall t\ge0.
	$
	Equivalently, the induced coupled CF dynamics define a
	causal discrete-time hybrid system that is well posed and forward
	invariant on the admissible domain
	$\mathcal{P}(\mathcal{Z})\times\mathcal{S}$.
\end{thm}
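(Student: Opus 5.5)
The plan is induction on $t$. The claim splits into well-definedness of the structural step \eqref{eq:cf_struct_update}, well-definedness of the belief step \eqref{eq:cf_belief_update}, and uniqueness.

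\textbf{Structural step.} Fix $t$ and suppose $(\mathfrak{B}_t,s_t)\in\mathcal{P}(\mathcal{Z})\times\mathcal{S}$. Assumption~\ref{ass:struct_improve} gives that $\Phi(\mathfrak{B}_t,s)\in\mathbb{R}_+$ is finite for every $s$. Because $\mathcal{S}$ is finite, the minimum $\min_{s\in\mathcal{S}}\Phi(\mathfrak{B}_t,s)$ exists and the argmin set is nonempty. The hysteresis test in \eqref{eq:cf_select} is then a comparison of real numbers, so its branch is determined uniquely. When the rule falls into the argmin branch, possible ties are broken by the deterministic selection operator $\mathcal{T}_{\mathrm{CF}}$ of \eqref{eq:TCF}; I will fix it concretely as the smallest minimizer under a fixed enumeration of $\mathcal{S}$. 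Hence $s_{t+1}$ is a single-valued function of $(\mathfrak{B}_t,s_t,u_t,y_{t+1})$ and lies in $\mathcal{S}$.

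\textbf{Belief step.} With $s_{t+1}$ fixed, $\mathfrak{B}_{t+1}=\mathcal{F}_{\theta,s_{t+1}}(\mathfrak{B}_t,u_t,y_{t+1})$ is given by the explicit formula \eqref{eq:bayes_update}. This formula is single-valued whenever its normalizer is defined. Lemma~\ref{lem:belief_invariance} applied with $s=s_{t+1}$ then gives $\mathfrak{B}_{t+1}\in\mathcal{P}(\mathcal{Z})$. Causality holds because $s_{t+1}$ and $\mathfrak{B}_{t+1}$ depend only on $\mathfrak{B}_t$, $s_t$, $u_t$ and $y_{t+1}$, hence inductively only on $\mathcal{I}_{t+1}$.

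\textbf{Main obstacle.} Lemma~\ref{lem:belief_invariance} requires $\ell_{\theta,s_{t+1}}(y_{t+1}\mid\mathfrak{B}_t,u_t)>0$. I plan to obtain this from Assumption~\ref{ass:struct_improve}: finiteness of $\Phi=-\log\ell$ at the selected structure forces $\ell>0$. This argument requires identifying the abstract $\Phi$ of Assumption~\ref{ass:struct_improve} with the innovation score \eqref{eq:Phi_def}, and I will adopt that identification, in line with the remark on constructive realization. If that identification is not granted, I would instead restrict to input--output sequences with positive innovation likelihood as the admissible domain. On either route, $\ell>0$ at the selected structure is exactly where any failure of well-posedness would occur, and this is the point the proof must handle with most care.

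\textbf{Induction.} With both steps single-valued, induction from $(\mathfrak{B}_0,s_0)$ yields existence, forward invariance of $\mathcal{P}(\mathcal{Z})\times\mathcal{S}$, and uniqueness of $\{(\mathfrak{B}_t,s_t)\}_{t\ge0}$. For uniqueness, two trajectories with the same data and the same initial condition agree at step $t+1$ whenever they agree at step $t$. Assumption~\ref{ass:fixed_structure} is needed only to hold $\theta$ fixed, so that each $\mathcal{F}_{\theta,s}$ is a fixed map and the coupled recursion is time-invariant; it is not used to freeze $s_t$. The resulting map defines a causal discrete-time hybrid system with continuous component $\mathfrak{B}_t$ and discrete component $s_t$.
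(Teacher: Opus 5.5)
Your proposal is correct and follows the paper's proof. Both arguments run the same way: finiteness of $\mathcal{S}$ gives a nonempty argmin, deterministic tie-breaking makes $s_{t+1}$ unique, the Bayesian update keeps $\mathfrak{B}_{t+1}\in\mathcal{P}(\mathcal{Z})$, and induction gives existence, uniqueness, forward invariance and causality. Your treatment of the hysteresis branch, and your derivation of $\ell_{\theta,s_{t+1}}(y_{t+1}\mid\mathfrak{B}_t,u_t)>0$ (needed by Lemma~\ref{lem:belief_invariance}) from finiteness of $\Phi=-\log\ell$, tighten a step the paper simply attributes to Assumption~\ref{ass:fixed_structure}.
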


\begin{pf}  
	By Assumption~\ref{ass:struct_improve}, the structural score
	$\Phi(\mathfrak{B}_t,s)$ is well defined for every
	$(\mathfrak{B}_t,s)\in\mathcal{P}(\mathcal{Z})\times\mathcal{S}$.
	Since $\mathcal{S}$ is finite, the minimization problem
	$
	\min_{s\in\mathcal{S}} \Phi(\mathfrak{B}_t,s)
	$
	in~\eqref{eq:cf_select}
	attains at least one minimizer for every
	$\mathfrak{B}_t\in\mathcal{P}(\mathcal{Z})$.
	Hence
	$\arg\min_{s\in\mathcal{S}} \Phi(\mathfrak{B}_t,s)\neq\varnothing$, $\forall{t}\ge0$.
	Because ties\footnote{i.e., the minimization problem
		$\min_{s\in\mathcal{S}}\Phi(\mathfrak{B}_t,s)$
		admits multiple minimizers.}
	are resolved deterministically in
	\eqref{eq:cf_select}, the selected structure
	$s_{t+1}\in\mathcal{S}$ is uniquely determined.
	Therefore the structure update
	\eqref{eq:cf_struct_update}
	is well defined $\forall{t}\ge0$.
	Next, by Assumption~\ref{ass:fixed_structure}, for each
	$s\in\mathcal{S}$, the structure-conditioned filtering operator
	$\mathcal{F}_{\theta,s}:\mathcal{P}(\mathcal{Z})\to\mathcal{P}(\mathcal{Z})$
	is well defined.
	Hence, once $s_{t+1}$ is determined, the belief update
	\eqref{eq:cf_belief_update}
	yields a unique posterior
	$\mathfrak{B}_{t+1}\in\mathcal{P}(\mathcal{Z})$.
	Consequently, if
	$(\mathfrak{B}_t,s_t)\in\mathcal{P}(\mathcal{Z})\times\mathcal{S}$,
	then
	$(\mathfrak{B}_{t+1},s_{t+1})\in
	\mathcal{P}(\mathcal{Z})\times\mathcal{S}$.
	Thus the admissible domain
	$\mathcal{P}(\mathcal{Z})\times\mathcal{S}$
	is forward invariant under the coupled recursion
	\eqref{eq:cf_struct_update}--\eqref{eq:cf_belief_update}.
	The base case holds since
	$(\mathfrak{B}_0,s_0)\in\mathcal{P}(\mathcal{Z})\times\mathcal{S}$.
	An induction argument then establishes existence and uniqueness of
	the sequence $\{(\mathfrak{B}_t,s_t)\}_{t\ge0}$ $\forall{t}\ge0$.
	Finally, causality follows directly from
	\eqref{eq:cf_struct_update}--\eqref{eq:cf_belief_update},
	because $(\mathfrak{B}_{t+1},s_{t+1})$ depends only on
	$(\mathfrak{B}_t,s_t)$ and the current data $(u_t,y_{t+1})$.
	Hence the coupled belief--structure recursion is well posed.
\end{pf}

\begin{thm}[Structural mismatch irreducibility]
	\label{thm:nonparametric_mismatch}
	Let $s \in \mathcal{S}$ be fixed and
	$\{\theta_t\}_{t\ge 0}$ arbitrary.
	Let $\{\mathfrak{B}_t^{\theta_t,s}\}$ satisfy
	\eqref{eq:filter_map_s}.
	If $s$ is structurally mismatched
	(Definition~\ref{def:struct_mismatch}), then
	$
	\inf_{\{\theta_t\}}
	\liminf_{t\to\infty}
	D_{\mathcal{KL}}\!\Big(
	\mathbb{P}^\star(\cdot \mid y_{1:t}, u_{1:t-1})
	\,\big\|\,
	\mathfrak{B}_t^{\theta_t,s}
	\Big)
	> 0 .
	$
\end{thm}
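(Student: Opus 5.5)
The statement is essentially an unpacking of Definition~\ref{def:struct_mismatch}, so the plan is to derive it directly from the quantifier structure of that definition rather than from any property of the filtering operator. Write
\[
L(\{\theta_t\}) := \liminf_{t\to\infty} D_{\mathcal{KL}}\!\Big(\mathbb{P}^\star(\cdot\mid y_{1:t},u_{1:t-1})\,\big\|\,\mathfrak{B}_t^{\theta_t,s}\Big)\in[0,\infty].
\]
This is well defined for every admissible parameter sequence. By Lemma~\ref{lem:belief_invariance}, each $\mathfrak{B}_t^{\theta_t,s}$ generated by \eqref{eq:filter_map_s} lies in $\mathcal{P}(\mathcal{Z})$ whenever the innovation likelihood is positive. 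The KL divergence between two probability measures is then a well-defined element of $[0,\infty]$, so $L$ is a well-defined extended-real-valued functional on sequences $\{\theta_t\}\subset\Theta$.

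Next, invoke Definition~\ref{def:struct_mismatch}. Since $s$ is structurally mismatched, there is an $\varepsilon>0$ that does \emph{not} depend on the parameter sequence such that $L(\{\theta_t\})\ge\varepsilon$ for every $\{\theta_t\}\subset\Theta$. So $\varepsilon$ is a lower bound for the set $\{L(\{\theta_t\})\}$. Because the infimum is the greatest lower bound, this gives $\inf_{\{\theta_t\}}L(\{\theta_t\})\ge\varepsilon>0$, which is the claim. I would also note that time-varying $\theta_t$ is already covered, since the definition quantifies over arbitrary sequences. Constant sequences are a special case, so static parameter tuning is included as well.

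The only delicate point, and the one I expect to need care, is the order of quantifiers. The conclusion holds precisely because $\varepsilon$ is chosen \emph{before} the sequence $\{\theta_t\}$, giving a uniform bound. A weaker, pointwise notion (each sequence having its own $\varepsilon_{\{\theta_t\}}>0$) would not suffice, since such bounds could have infimum zero. In the write-up I would emphasize that Definition~\ref{def:struct_mismatch} is stated with $\exists\varepsilon\,\forall\{\theta_t\}$, so the bound is uniform.

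I would also comment that the infimum may equal $+\infty$, which is harmless for the strict inequality. The result is a uniform, nonparametric version of the mismatch condition: it motivates the structural rather than parametric adaptation of \eqref{eq:cf_struct_update}.
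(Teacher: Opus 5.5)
Your proposal is correct and follows the same route as the paper: both proofs simply unpack Definition~\ref{def:struct_mismatch}, using that a single $\varepsilon>0$ bounds $\liminf_{t\to\infty} D_{\mathcal{KL}}$ from below for every parameter sequence $\{\theta_t\}$. Your explicit greatest-lower-bound step $\inf_{\{\theta_t\}}\liminf_{t\to\infty} D_{\mathcal{KL}}(\cdots)\ge\varepsilon$, with its emphasis on the $\exists\varepsilon\,\forall\{\theta_t\}$ quantifier order, is slightly more careful than the paper's write-up, which only concludes that the divergence cannot tend to zero and leaves the passage to the infimum implicit.
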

\begin{pf}
	Fix an arbitrary parameter sequence $\{\theta_t\}_{t\ge 0}$ and let
	$\{\mathfrak{B}_t^{\theta_t,s}\}_{t\ge 0}$ be generated by
	\eqref{eq:filter_map_s} under the fixed structure $s$.
	Since $s$ is structurally mismatched in the sense of
	Definition~\ref{def:struct_mismatch}, $\exists\varepsilon>0$ such that,
	for every admissible parameter sequence $\{\theta_t\}_{t\ge 0}$,
	$
	\liminf_{t\to\infty}
	D_{\mathcal{KL}}\!\Big(
	\mathbb{P}^\star(\cdot \mid y_{1:t},u_{1:t-1})
	\,\big\|\,
	\mathfrak{B}_t^{\theta_t,s}
	\Big)
	\ge \varepsilon > 0 .
	$
	Hence the divergence
	$
	D_{\mathcal{KL}}\!\Big(
	\mathbb{P}^\star(\cdot \mid y_{1:t},u_{1:t-1})
	\,\big\|\,
	\mathfrak{B}_t^{\theta_t,s}
	\Big)
	$
	cannot converge to $0$ as $t\to\infty$.
	Therefore the belief sequence $\{\mathfrak{B}_t^{\theta_t,s}\}_{t\ge 0}$ is not
	asymptotically consistent with
	$\mathbb{P}^\star(\cdot \mid y_{1:t},u_{1:t-1})$.
	Hence no possibly time-varying parameter adaptation
	$\{\theta_t\}_{t\ge 0}$ can eliminate the discrepancy, which is intrinsic
	to the structural constraint imposed by $s$.
\end{pf}

The next result quantifies the representational benefit of structural
adaptation relative to fixed-structure filtering in adaptive identification theory
\cite{Ljung1999system,narendra1989stable}.



\begin{thm}[Admissible update expansion]
	\label{thm:reachable_set}
	Fix a parameter class $\Theta$ and consider the Bayesian filtering operator
	$\mathcal{F}_{\theta,s}$ defined in~\eqref{eq:filter_map_s}.
	For a fixed latent structure $s\in\mathcal{S}$, define the one-step reachable
	set of beliefs as
	$
	\mathcal{R}_s(\mathfrak{B},u,y)
	\;:=\;
	\bigl\{
	\mathcal{F}_{\theta,s}(\mathfrak{B},u,y)
	\;\big|\;
	\theta\in\Theta
	\bigr\}
	\;\subseteq\;
	\mathcal{P}(\mathcal{Z}).
	$
	Under CF, define the corresponding reachable set, in the sense of belief
	updates induced by uncertainty over admissible models
	(cf. reachable-set constructions
	\cite{Aswani2013}), as
	$
	\mathcal{R}_{\mathrm{CF}}(\mathfrak{B},u,y)
	\;:=\;
	\bigcup_{s\in\mathcal{S}}
	\mathcal{R}_s(\mathfrak{B},u,y).
	$
	Then, for any belief $\mathfrak{B}\in\mathcal{P}(\mathcal{Z})$, input $u$,
	observation $y$, and any fixed structure $s\in\mathcal{S}$,
	$
	\mathcal{R}_s(\mathfrak{B},u,y)
	\;\subseteq\;
	\mathcal{R}_{\mathrm{CF}}(\mathfrak{B},u,y).
	$
	Moreover, if $\exists{s_1}\neq s_2$ such that   
	$
	\mathcal{R}_{s_1}(\mathfrak{B},u,y)
	\;\neq\;
	\mathcal{R}_{s_2}(\mathfrak{B},u,y),
	\quad
	\exists\,(\mathfrak{B},u,y)\in
	\mathcal{P}(\mathcal{Z})\times\mathcal{U}\times\mathcal{Y}.
	$
	then the inclusion is strict for at least one $s\in\mathcal{S}$, i.e.,
	$
	\mathcal{R}_s(\mathfrak{B},u,y)
	\;\subsetneq\;
	\mathcal{R}_{\mathrm{CF}}(\mathfrak{B},u,y).
	$
\end{thm}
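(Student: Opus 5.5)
The first inclusion follows directly from the definition of $\mathcal{R}_{\mathrm{CF}}$ as a union, so the plan is an elementary set-theoretic argument. Before anything else, I will confirm that both reachable sets are genuine subsets of $\mathcal{P}(\mathcal{Z})$. For every $\theta\in\Theta$ and $s\in\mathcal{S}$, Lemma~\ref{lem:belief_invariance} gives $\mathcal{F}_{\theta,s}(\mathfrak{B},u,y)\in\mathcal{P}(\mathcal{Z})$. This requires the standing condition $\ell_{\theta,s}(y\mid\mathfrak{B},u)>0$, so I will restrict attention to triples $(\mathfrak{B},u,y)$ where the innovation likelihood is positive. Hence $\mathcal{R}_s(\mathfrak{B},u,y)\subseteq\mathcal{P}(\mathcal{Z})$, and the same holds for the finite union $\mathcal{R}_{\mathrm{CF}}(\mathfrak{B},u,y)$.

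For the non-strict inclusion, fix $(\mathfrak{B},u,y)$ and $s\in\mathcal{S}$, and take any $b\in\mathcal{R}_s(\mathfrak{B},u,y)$. Then $b=\mathcal{F}_{\theta,s}(\mathfrak{B},u,y)$ for some $\theta\in\Theta$. Since $s$ is one of the indices in the union defining $\mathcal{R}_{\mathrm{CF}}$, we get $b\in\mathcal{R}_{\mathrm{CF}}(\mathfrak{B},u,y)$. This holds uniformly in $(\mathfrak{B},u,y)$ and $s$.

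For strictness, I fix the triple $(\mathfrak{B},u,y)$ at which $\mathcal{R}_{s_1}\neq\mathcal{R}_{s_2}$. Two sets that differ cannot contain each other in both directions, so at least one of $\mathcal{R}_{s_2}\setminus\mathcal{R}_{s_1}$ and $\mathcal{R}_{s_1}\setminus\mathcal{R}_{s_2}$ is nonempty. Relabelling if necessary, I take $b^\star\in\mathcal{R}_{s_2}(\mathfrak{B},u,y)\setminus\mathcal{R}_{s_1}(\mathfrak{B},u,y)$. By the first part, $b^\star\in\mathcal{R}_{s_2}\subseteq\mathcal{R}_{\mathrm{CF}}$, while $b^\star\notin\mathcal{R}_{s_1}$. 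Combined with $\mathcal{R}_{s_1}\subseteq\mathcal{R}_{\mathrm{CF}}$, this gives $\mathcal{R}_{s_1}(\mathfrak{B},u,y)\subsetneq\mathcal{R}_{\mathrm{CF}}(\mathfrak{B},u,y)$, so the structure $s=s_1$ witnesses the claim.

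The only step needing care is the quantifier structure of the strictness claim; there is no real mathematical obstacle. The hypothesis only asserts a difference at some triple $(\mathfrak{B},u,y)$, so strictness can only be concluded at that same triple, not for every $(\mathfrak{B},u,y)$. I will state explicitly that the strict inclusion holds at the witnessing triple for the structure whose reachable set misses an element of the other. I will also note that the relabelling step is where the choice of $s\in\{s_1,s_2\}$ is made.
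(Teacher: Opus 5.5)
Your proposal is correct and follows the paper's argument: $\mathcal{R}_s\subseteq\mathcal{R}_{\mathrm{CF}}$ is read off directly from the union, and strictness comes from the witnessing triple $(\mathfrak{B},u,y)$ at which $\mathcal{R}_{s_1}\neq\mathcal{R}_{s_2}$. Your version is somewhat more careful than the paper's, since you explicitly pick an element of a nonempty set difference and note that strict inclusion is only established at that triple.
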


\begin{pf}
	By definition,
	$$
	\mathcal{R}_{\mathrm{CF}}(\mathfrak{B},u,y)
	=
	\bigcup_{s\in\mathcal{S}}
	\mathcal{R}_s(\mathfrak{B},u,y),
	$$
	and hence
	$
	\mathcal{R}_s(\mathfrak{B},u,y)
	\subseteq
	\mathcal{R}_{\mathrm{CF}}(\mathfrak{B},u,y),
	\quad
	\forall\, s\in\mathcal{S}.
	$
	If $\exists{(s_1,s_2)}\in\mathcal{S}$ such that  
	$$
	\mathcal{R}_{s_1}(\mathfrak{B},u,y)
	\;\neq\;
	\mathcal{R}_{s_2}(\mathfrak{B},u,y), 
	\exists\,(\mathfrak{B},u,y)\in
	\mathcal{P}(\mathcal{Z})\times\mathcal{U}\times\mathcal{Y},
	$$ then consequently
	$$\bigcup_{s\in\mathcal{S}} \mathcal{R}_s(\mathfrak{B},u,y)
	\supsetneq
	\mathcal{R}_{s_1}(\mathfrak{B},u,y)$$
	for at least one $s_1\in\mathcal{S}$, i.e., CF strictly enlarges the
	structure-conditioned reachable set. The claim follows.
\end{pf}


\begin{rem}[Representation-level reachability]
	Unlike classical reachable-set enlargements induced by parametric
	uncertainty or probabilistic hybrid dynamics~\cite{Aswani2013,Abate2008},
	CF enlarges admissible belief evolution through variation in the latent
	structure $s$, rather than through parameter variation within a fixed
	structure.
\end{rem}


\begin{rem}[Implication for observation shifts]
	Experiment~\ref{sec:exp2} illustrates a regime in which a change in the
	observation model $p_{\theta,s}(y\mid z) \mapsto \tilde p_{\theta,s}(y\mid z)$ destroys latent-state identifiability under any fixed
	structure, i.e.,
	$\mathfrak{B}_t \not\to \mathbb{P}^\star(\cdot \mid y_{1:t},u_{1:t-1})$. In that case, by Theorem~\ref{thm:reachable_set}, 
	CF preserves admissible belief evolution by
	switching across $s\in\mathcal{S}$ 
	rather than remaining confined to a
	single observation-induced belief manifold, $\mathcal{M}_s := \{\mathfrak{B} : \ell_{\theta,s}(y\mid \mathfrak{B},u)=\text{const}\}$.
\end{rem}

We next clarify how this enlargement differs fundamentally from probabilistic
	mode-mixing approaches such as IMM filtering~\cite{blom1988interacting,barshalom2001estimation}.

\begin{prop}[Reachable set expansion]
		\label{prop:belief_reachability}
		Let $\mathcal{S}$ be a finite set of latent structures and
		$\mathfrak{B}_0 \in \mathcal{P}(\mathcal{Z})$ an initial belief.
		For admissible input–output sequences
		$(u_t,y_t)_{t\ge 0}\in \mathcal{U}^{\mathbb{N}}\times\mathcal{Y}^{\mathbb{N}}$,
		define
		\begin{equation}
			\mathcal{B}_{\mathrm{IMM}}
			:=
			\Big\{ \{\mathfrak{B}_t\}_{t\ge0} \;\Big|\;
			\mathfrak{B}_{t+1} \in 
			\operatorname{co}\!\big(
			F_{\theta,s}(\mathfrak{B}_t,u_t,y_{t+1}),
			\ s\in\mathcal{S}
			\big)
			\Big\},
			\label{eq:imm_update}
		\end{equation} 
		and
		\begin{equation}
			\mathcal{B}_{\mathrm{CF}}
			:=
			\Big\{ \{\mathfrak{B}_t\}_{t\ge0} \;\Big|\;
			\mathfrak{B}_{t+1}
			=
			F_{\theta,s_{t+1}}(\mathfrak{B}_t,u_t,y_{t+1}),
			\ s_{t+1}\in\mathcal{S}
			\Big\}.
			\label{eq:CF_update_cf}
		\end{equation}
		Then, in general,
		$
		\mathcal{B}_{\mathrm{IMM}} \subsetneq \mathcal{B}_{\mathrm{CF}} .
		$
\end{prop}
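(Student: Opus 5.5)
The plan is to read the proposition the way the surrounding text intends: IMM is \emph{probabilistic mode-mixing}, and CF is \emph{deterministic selection} of one admissible structure per step. A literal reading of~\eqref{eq:imm_update} alone would not give the claim. Every vertex $F_{\theta,s}(\mathfrak{B}_t,u_t,y_{t+1})$ lies in the convex hull, so under that reading each CF trajectory would also be IMM-admissible. I therefore make precise what ``in general'' must mean, under two standing conventions.

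\emph{Convention (a), IMM weights.} The convex weights in~\eqref{eq:imm_update} are the IMM mode probabilities $\lambda_t(s)$. These are generated by a Markov mode prior with full support together with the innovation likelihoods $\ell_{\theta,s}$ in~\eqref{eq:innovation_likelihood}, so that $\lambda_t(s)>0$ for all $s\in\mathcal{S}$ and all $t$.

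\emph{Convention (b), richness of $\mathcal{S}$.} The structure family contains the composite (mixture) structure $s_{\mathrm{mix}}$ whose Bayesian update reproduces the IMM mixing-and-filtering step. This is the sense in which $\mathcal{S}$ is a family of \emph{latent structures} rather than a fixed bank of modes, as discussed in the Introduction.

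\textbf{Step 1: inclusion $\mathcal{B}_{\mathrm{IMM}}\subseteq\mathcal{B}_{\mathrm{CF}}$.} Take an IMM-admissible trajectory. At each $t$, write $\mathfrak{B}_{t+1}=\sum_{s}\lambda_t(s)F_{\theta,s}(\mathfrak{B}_t,u_t,y_{t+1})$. By convention (b), this equals $F_{\theta,s_{\mathrm{mix}}}(\mathfrak{B}_t,u_t,y_{t+1})$. Choosing $s_{t+1}=s_{\mathrm{mix}}$ then satisfies~\eqref{eq:CF_update_cf}. Lemma~\ref{lem:belief_invariance} and Theorem~\ref{thm:wellposedness_cf} guarantee that every iterate remains in $\mathcal{P}(\mathcal{Z})$, so the induction over $t$ closes.

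\textbf{Step 2: strictness.} Pick data $(\mathfrak{B},u,y)$ and $s_1\neq s_2$ with $F_{\theta,s_1}\neq F_{\theta,s_2}$; such data exist by the hypothesis of Theorem~\ref{thm:reachable_set}. Build a CF trajectory with $s_1$ selected at that step.

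Now suppose this step were IMM-admissible. Then $F_{\theta,s_1}=\sum_s\lambda(s)F_{\theta,s}$ with $\lambda(s_2)>0$. Rearranging,
\[
(1-\lambda(s_1))F_{\theta,s_1}=\sum_{s\neq s_1}\lambda(s)F_{\theta,s}.
\]
Both sides are positive multiples of probability measures, and every $s\neq s_1$ carries positive weight. It follows that $F_{\theta,s_1}$ is a nontrivial average of the other updates.

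This is ruled out when $F_{\theta,s_1}$ is an extreme point of $\operatorname{co}\{F_{\theta,s}\}_{s\in\mathcal{S}}$. For $|\mathcal{S}|=2$ it is immediate, since the identity forces $F_{\theta,s_1}=F_{\theta,s_2}$. In general, I would choose $s_1$ to be an extreme point of the finite hull, which always exists. This contradiction shows the CF trajectory is not in $\mathcal{B}_{\mathrm{IMM}}$.

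\textbf{Main obstacle.} The hard step is Step 2, specifically guaranteeing that the selected vertex is a genuine extreme point with all IMM weights strictly positive. I would first try to secure positivity from the full-support mode prior in convention (a). Then, when $F_{\theta,s_1}\neq F_{\theta,s_2}$, I would exhibit a set $A$ on which the two posteriors differ and compare masses on $A$ to rule out equality.

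The honest caveat is that both inclusion and strictness rest on conventions (a)--(b). The proof should state explicitly that ``in general'' refers to this richness and positivity regime.
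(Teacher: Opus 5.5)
Your opening observation is correct, and it is decisive rather than a technicality. Each $F_{\theta,s}(\mathfrak{B}_t,u_t,y_{t+1})$ is a generator of the hull in \eqref{eq:imm_update}, so $\mathcal{B}_{\mathrm{CF}}\subseteq\mathcal{B}_{\mathrm{IMM}}$ holds identically. Consequently $\mathcal{B}_{\mathrm{IMM}}\subsetneq\mathcal{B}_{\mathrm{CF}}$ is impossible in every instance, not merely ``in general'': it would give $\mathcal{B}_{\mathrm{IMM}}\subsetneq\mathcal{B}_{\mathrm{IMM}}$. Suppose $F_{\theta,s_1}\neq F_{\theta,s_2}$ at some step reachable along a CF trajectory. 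Then the segment $\{\lambda F_{\theta,s_1}+(1-\lambda)F_{\theta,s_2}\}_{\lambda\in(0,1)}$ contains infinitely many distinct measures, while CF offers only $|\mathcal{S}|<\infty$ one-step options. So the true relation is $\mathcal{B}_{\mathrm{CF}}\subsetneq\mathcal{B}_{\mathrm{IMM}}$. The paper's own proof fails exactly where you looked. It asserts that CF updates ``are not restricted to the convex hull above,'' which is false. It never argues $\mathcal{B}_{\mathrm{IMM}}\subseteq\mathcal{B}_{\mathrm{CF}}$. Its appeal to Theorem~\ref{thm:reachable_set} compares CF with a single fixed $s$ (a union of reachable sets), not with convex mixing. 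The statement as written therefore admits no proof, and the right deliverable is a counterexample, optionally followed by a corrected statement.

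The gap in your proposal is that conventions (a)--(b) replace the proposition with a different one, and (b) is where the missing inclusion is assumed rather than proved. The same finite-versus-infinite count shows that no finite enlargement of $\mathcal{S}$ can yield $\mathcal{B}_{\mathrm{IMM}}\subseteq\mathcal{B}_{\mathrm{CF}}$ for the hull as defined. So (a) is not an interpretation of \eqref{eq:imm_update}; it replaces it with one specific weighting rule. Even granting (a), (b) is not coherent as stated. Both sets are indexed by the same $\mathcal{S}$, so $s_{\mathrm{mix}}\in\mathcal{S}$ puts $F_{\theta,s_{\mathrm{mix}}}$ inside the IMM hull with weight $\lambda_t(s_{\mathrm{mix}})>0$ by (a). That makes ``its update reproduces the IMM step'' self-referential. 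Moreover, under a Markov mode prior $\lambda_t$ depends on $\lambda_{t-1}$, which is not a function of $(\mathfrak{B}_t,u_t,y_{t+1})$. Hence in general no time-invariant filtering map of the form \eqref{eq:filter_map_s} can reproduce every IMM step. (A memoryless mode prior $\pi$, with weights $\propto\pi(s)\,\ell_{\theta,s}(y_{t+1}\mid\mathfrak{B}_t,u_t)$, does coincide with the exact Bayes filter of an i.i.d.-mode mixture model. But that works only if IMM and CF are indexed by different sets, contrary to the statement.) Your Step~2 extreme-point argument is valid. It shows that a CF step selecting an extreme generator cannot be produced by any mixing that puts positive weight on a distinct generator. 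That is the defensible content here, but present it as a result about positive-weight IMM, alongside the counterexample, not as a proof of $\mathcal{B}_{\mathrm{IMM}}\subsetneq\mathcal{B}_{\mathrm{CF}}$.
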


\begin{pf}
	By Theorem~\ref{thm:reachable_set}, for any $(\mathfrak{B}_t,u_t,y_{t+1})$,
	the admissible one-step update under CF strictly contains that of any fixed
	$s\in\mathcal{S}$. 
	For fixed $s$, define the trajectory class   
	$\mathcal{B}_s := \{\mathfrak{B}_{t+1} = F_{\theta,s}(\mathfrak{B}_t,u_t,y_{t+1})\}$,
	and let $\mathcal{B}_{\mathrm{CF}}$ be induced by~\eqref{eq:CF_update_cf} with
	$s_{t+1}\in\mathcal{S}$.
	In IMM filtering~\cite{barshalom2001estimation,10638633}, one has
	$\mathfrak{B}_{t+1}
	\in
	\operatorname{co}
	\{F_{\theta,s}(\mathfrak{B}_t,u_t,y_{t+1}) : s\in\mathcal{S}\}$,
	which defines $\mathcal{B}_{\mathrm{IMM}}$ in~\eqref{eq:imm_update}.
	This set is forward invariant, i.e.,
	$\mathfrak{B}_t \in \mathcal{B}_{\mathrm{IMM}} \Rightarrow
	\mathfrak{B}_{t+1} \in \mathcal{B}_{\mathrm{IMM}}$, $\forall{t}$.
	CF generates updates of the form
	$\mathfrak{B}_{t+1} = F_{\theta,s_{t+1}}(\mathfrak{B}_t,u_t,y_{t+1})$,
	with $s_{t+1}\in\mathcal{S}$, which are not restricted to the convex hull above.
	Hence there exists a switching sequence $\{s_t\}_{t\ge0}$ such that
	$\{\mathfrak{B}_t\}_{t\ge0} \not\subset \mathcal{B}_{\mathrm{IMM}}$,
	while $\{\mathfrak{B}_t\}_{t\ge0} \subset \mathcal{B}_{\mathrm{CF}}$.
	$\mathcal{B}_{\mathrm{IMM}} \subsetneq \mathcal{B}_{\mathrm{CF}}$.
\end{pf}

\begin{rem}
	Equation~\eqref{eq:cf_belief_update} defines a 
	one-step update, whereas
	\eqref{eq:imm_update} and~\eqref{eq:CF_update_cf}
	collect the corresponding belief trajectories under
	IMM mixing and CF structure selection;
	Proposition~\ref{prop:belief_reachability} lifts
	the one-step enlargement of
	Theorem~\ref{thm:reachable_set} to trajectory level.
\end{rem}
	\subsection{Structural adaptation mechanism \textbf{(core theory)}}	
\label{sec:layer2}
\begin{lem}[Structural descent]
	\label{lem:struct_descent}
	Under Assumption~\ref{ass:struct_improve}, the structural update
	$s_{t+1}=\mathcal{T}_{\mathrm{CF}}(\mathfrak{B}_t,s_t)$ in~\eqref{eq:cf_struct_update}
	satisfies
	\begin{equation}
		\Phi(\mathfrak{B}_t,s_{t+1}) \le \Phi(\mathfrak{B}_t,s_t),
		\label{eq:lemma_descent}
	\end{equation}
	with strict inequality whenever $s_t$ is structurally mismatched.
\end{lem}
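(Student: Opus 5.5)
The proof follows directly from the minimizing property of the selection operator $\mathcal{T}_{\mathrm{CF}}$ in~\eqref{eq:TCF}. By construction, $s_{t+1}=\mathcal{T}_{\mathrm{CF}}(\mathfrak{B}_t,s_t)\in\arg\min_{s\in\mathcal{S}}\Phi(\mathfrak{B}_t,s)$, and this argmin is nonempty because $\mathcal{S}$ is finite (as shown in Theorem~\ref{thm:wellposedness_cf}). Hence
\[
\Phi(\mathfrak{B}_t,s_{t+1})=\min_{s\in\mathcal{S}}\Phi(\mathfrak{B}_t,s)\le\Phi(\mathfrak{B}_t,s_t),
\]
since $s_t\in\mathcal{S}$ is itself a feasible candidate. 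This gives~\eqref{eq:lemma_descent}. The belief argument $\mathfrak{B}_t$ is the same on both sides, so nothing about the filtering operator $\mathcal{F}_{\theta,s}$ is needed here.

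If the hysteresis rule~\eqref{eq:cf_select} with $\delta>0$ is used instead, there are two cases. When $s_{t+1}=s_t$, equality holds trivially. Otherwise $s_{t+1}$ is an argmin, and the same inequality holds (strictly, since then $\Phi(\mathfrak{B}_t,s_t)>\min_s\Phi+\delta$).

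For the strict part, I would use Assumption~\ref{ass:struct_improve}. The intended reading is that a structurally mismatched $s_t$ lies outside $\mathcal{S}^\star$, so $\Phi(\mathfrak{B}_t,s_t)>0$. If $\mathcal{S}^\star\neq\varnothing$ (a correctly specified structure exists in the family), then any $s^\star\in\mathcal{S}^\star$ satisfies $\Phi(\mathfrak{B}_t,s^\star)=0$. It follows that $\min_s\Phi(\mathfrak{B}_t,s)=0<\Phi(\mathfrak{B}_t,s_t)$, and so
\[
\Phi(\mathfrak{B}_t,s_{t+1})=0<\Phi(\mathfrak{B}_t,s_t).
\]

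The main obstacle is this strictness claim. It needs two things. First, that Definition~\ref{def:struct_mismatch} implies $s_t\notin\mathcal{S}^\star$; this can be argued by noting that $s\in\mathcal{S}^\star$ should correspond to correct specification, for which the KL gap vanishes. Second, that $\mathcal{S}^\star$ is nonempty. Without the second condition, e.g.\ in the situation the paper notes where all structures are mismatched and tied, strictness can fail. I would therefore state nonemptiness of $\mathcal{S}^\star$ explicitly as implicit in Assumption~\ref{ass:struct_improve}, or alternatively phrase strictness as holding whenever $s_t\notin\arg\min_s\Phi(\mathfrak{B}_t,s)$.
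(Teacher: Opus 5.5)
Your proof of the non-strict inequality is the paper's proof: $s_{t+1}\in\arg\min_{s}\Phi(\mathfrak{B}_t,s)$, and $s_t$ is a feasible competitor. For strictness you take a different route. The paper argues from Definition~\ref{def:struct_mismatch}. It asserts that the KL gap $\varepsilon$ between $\mathbb{P}^\star(\cdot\mid y_{1:t},u_{1:t-1})$ and the belief under $s_t$ produces some $\bar s\in\mathcal{S}$ with $\Phi(\mathfrak{B}_t,\bar s)\le\Phi(\mathfrak{B}_t,s_t)-\varepsilon$. The argmin property then gives $\Phi(\mathfrak{B}_t,s_{t+1})\le\Phi(\mathfrak{B}_t,\bar s)<\Phi(\mathfrak{B}_t,s_t)$. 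You instead use the zero-level characterization in Assumption~\ref{ass:struct_improve}: if $\mathcal{S}^\star\neq\varnothing$ and $s_t\notin\mathcal{S}^\star$, then $\min_s\Phi(\mathfrak{B}_t,s)=0<\Phi(\mathfrak{B}_t,s_t)$.

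If the paper's implication held, it would give a margin $\varepsilon$ uniform in $t$ and would not need $\mathcal{S}^\star\neq\varnothing$. However, it does not follow from the stated definitions, for three reasons. First, the KL divergence to the true conditional law and the innovation score are different quantities. Second, the definition is an asymptotic $\liminf$, not a bound at each fixed $t$. Third, nothing in it guarantees that some other structure scores better. Indeed, the paper itself describes a tie scenario (every structure mismatched, $\min_s\Phi(\mathfrak{B}_t,s)>0$); there, a mismatched $s_t$ that is already a minimizer yields equality.

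Your route is rigorous relative to Assumption~\ref{ass:struct_improve}. The price is making explicit two bridging hypotheses the paper leaves implicit: $\mathcal{S}^\star\neq\varnothing$, and mismatch $\Rightarrow s_t\notin\mathcal{S}^\star$. The margin you obtain is $\Phi(\mathfrak{B}_t,s_t)$ itself, positive but not uniform in $t$. Your fallback formulation, strictness whenever $s_t\notin\arg\min_s\Phi(\mathfrak{B}_t,s)$, is the cleanest correct version of the claim.

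Note also that strictness genuinely depends on the pure argmin operator $\mathcal{T}_{\mathrm{CF}}$. Under the hysteresis rule, a mismatched $s_t$ with $0<\Phi(\mathfrak{B}_t,s_t)\le\delta$ is retained. Your case analysis correctly claims strictness there only when a switch is actually triggered.
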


\begin{pf}
	
	Fix $t$ and $s_t \in \mathcal{S}$. By \eqref{eq:cf_struct_update},
	$s_{t+1} = \mathcal{T}_{\mathrm{CF}}(\mathfrak{B}_t,s_t) \in \mathcal{S}$.
	By Assumption~\ref{ass:struct_improve},
	$
	s_{t+1} \in \arg\min_{s\in\mathcal{S}} \Phi(\mathfrak{B}_t,s),
	$
	hence
	$
	\Phi(\mathfrak{B}_t,s_{t+1}) \le \Phi(\mathfrak{B}_t,s_t),
	$
	which establishes \eqref{eq:lemma_descent}.
	If $s_t$ is structurally mismatched, then by Definition~\ref{def:struct_mismatch},
	$\exists\,\varepsilon>0$ such that
	$
	\inf_{\theta\in\Theta_{s_t}}
	D_{\mathcal{KL}}\!\left(
	\mathbb{P}^\star(\cdot \mid y_{1:t},u_{1:t-1})
	\,\middle\|\,
	\mathfrak{B}_t^{\theta,s_t}
	\right)
	\ge \varepsilon.
	$
	Thus $\exists\,\bar{s}\in\mathcal{S}$ such that
	$
	\Phi(\mathfrak{B}_t,\bar{s})
	\le
	\Phi(\mathfrak{B}_t,s_t) - \varepsilon.
	$
	By Assumption~\ref{ass:struct_improve},
	$
	\Phi(\mathfrak{B}_t,s_{t+1})
	\le
	\Phi(\mathfrak{B}_t,\bar{s})
	<
	\Phi(\mathfrak{B}_t,s_t),
	$
	which yields the strict inequality in \eqref{eq:lemma_descent}.
\end{pf}

\begin{lem}[Finite switching]
	\label{lem:finite_switching}
	Suppose $\exists\ s^\star \in \mathcal{S}$ and constants
	$\Delta > 0$ and $T_0 \in \mathbb{N}$ such that
	$
	\Phi_t(s^\star) \;\le\; \Phi_t(s) - \Delta,
	\qquad
	\forall s \in \mathcal{S}\setminus\{s^\star\},
	\;\forall t \ge T_0 .
	$
	Then, under the CF selection rule~\eqref{eq:cf_select} with hysteresis,
	the structure sequence $\{s_t\}$ switches only finitely many times and
	satisfies $s_t = s^\star$ for all sufficiently large $t$.
\end{lem}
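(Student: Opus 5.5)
The plan is to read the hypothesis with $\Phi_t(s):=\Phi(\mathfrak{B}_t,s)$ as in~\eqref{eq:Phi_def}, and to split the argument according to whether the hysteresis margin is small or large compared with the separation. The key observation is that for every $t\ge T_0$ the score $s^\star$ is the \emph{unique} minimizer of $\Phi_t(\cdot)$ over the finite set $\mathcal{S}$, with a gap of at least $\Delta$. Hence $\arg\min_{s\in\mathcal{S}}\Phi(\mathfrak{B}_t,s)=\{s^\star\}$, so any tie-breaking performed by $\mathcal{T}_{\mathrm{CF}}$ in~\eqref{eq:TCF} is irrelevant after $T_0$. Whenever the ``otherwise'' branch of~\eqref{eq:cf_select} fires at some $t\ge T_0$, it therefore produces $s_{t+1}=s^\star$.

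\textbf{Step 1 (absorption).} Suppose $s_{t_1}=s^\star$ for some $t_1\ge T_0$. Then $\Phi(\mathfrak{B}_{t_1},s_{t_1})=\min_s\Phi(\mathfrak{B}_{t_1},s)\le \min_s\Phi(\mathfrak{B}_{t_1},s)+\delta$, so the first branch of~\eqref{eq:cf_select} keeps $s_{t_1+1}=s^\star$. Induction gives $s_t=s^\star$ for all $t\ge t_1$. The same conclusion holds under the pure argmin rule~\eqref{eq:cf_struct_update}, because it outputs the unique minimizer.

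\textbf{Step 2 (reaching $s^\star$).} Suppose $s_t\neq s^\star$ at some $t\ge T_0$. Then $\Phi_t(s_t)\ge \Phi_t(s^\star)+\Delta=\min_s\Phi_t(s)+\Delta$. If $\delta<\Delta$, the hold condition of~\eqref{eq:cf_select} fails, so $s_{t+1}=s^\star$, and Step~1 applies from $t+1$ onward. Consequently $s_t=s^\star$ for all $t\ge T_0+1$. This means at most one switch can occur after $T_0$, and any switches before that are finitely many simply because there are only $T_0$ earlier time steps. That proves both finite switching and eventual constancy at $s^\star$.

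\textbf{Main obstacle.} The hardest part is the role of $\delta$. If $\delta\ge\Delta$, a mismatched $s_t$ could satisfy the hold condition indefinitely, and the claim would fail as stated. I will therefore make explicit the standing requirement $0\le\delta<\Delta$ and note that $\delta=0$ is the pure argmin rule. I would present this as the interpretation under which ``with hysteresis'' is meaningful: the hysteresis margin must lie strictly below the persistent separation. An alternative would be to assume only $\delta\le\Delta$ with strict inequality in the separation hypothesis, which leads to the same argument. With that settled, the rest is the elementary induction above, and together with Lemma~\ref{lem:struct_descent} it shows that the score is nonincreasing along the absorbed trajectory.
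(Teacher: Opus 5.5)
Your proof is correct and follows the paper's argument. Both take the margin $\delta<\Delta$ and observe that for $t\ge T_0$ a mismatched $s_t$ fails the hold test of~\eqref{eq:cf_select} and is replaced by $s^\star$, while $s^\star$ itself always passes it, so $s_t=s^\star$ for all $t\ge T_0+1$ and only the finitely many earlier steps can carry switches; your remarks on uniqueness of the minimizer and on failure for $\delta\ge\Delta$ make explicit what the paper assumes silently. One small correction: drop the closing aside about Lemma~\ref{lem:struct_descent}, since that lemma compares scores at the same belief $\mathfrak{B}_t$ and gives nothing about $t\mapsto\Phi(\mathfrak{B}_t,s^\star)$ being nonincreasing, which is false in general.
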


\begin{pf}
	Let $\delta \in (0,\Delta)$. By assumption, $\forall{t} \ge T_0$ and $\forall{s} \neq s^\star$,
	$
	\Phi_t(s^\star) \le \Phi_t(s) - \Delta < \Phi_t(s) - \delta.
	$
	Hence, if $s_t \neq s^\star$, $t \ge T_0$, the hysteresis condition
	in~\eqref{eq:cf_select} implies
	$
	s_{t+1} = s^\star.
	$  
	If $s_t = s^\star$, then $\forall{s} \neq s^\star$,
	$
	\Phi_t(s^\star) < \Phi_t(s) - \delta,
	$
	so no switch is triggered, i.e.,
	$
	s_{t+1} = s^\star.
	$
	Thus, $s_t = s^\star$, $\forall{t} \ge T_0 + 1$.
	Since the interval $\{0,\dots,T_0\}$ is finite, the number of switches is finite.
\end{pf}


The next result interprets the coupled recursion as a hybrid system on
$\mathcal{P}(\mathcal{Z})\times\mathcal{S}$.


\begin{prop}[Hybrid belief--structure dynamics]
	\label{prop:hybrid}
	The coupled recursion \eqref{eq:cf_struct_update}--\eqref{eq:cf_belief_update}
	defines a discrete-time hybrid system on $\mathcal{P}(\mathcal{Z}) \times \mathcal{S}$:
	$
	(\mathfrak{B}_t,s_t)
	\mapsto
	(\mathfrak{B}_{t+1},s_{t+1}),
	$
	with
	$
	s_{t+1} = \mathcal{T}_{\mathrm{CF}}(\mathfrak{B}_t,s_t),
	\quad
	\mathfrak{B}_{t+1}
	=
	\mathcal{F}_{\theta,s_{t+1}}(\mathfrak{B}_t,u_t,y_{t+1}).
	$
	For fixed $s \in \mathcal{S}$,
	$
	\mathfrak{B}_{t+1}
	=
	\mathcal{F}_{\theta,s}(\mathfrak{B}_t,u_t,y_{t+1}).
	$
\end{prop}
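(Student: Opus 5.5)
The plan is to verify directly that the coupled recursion \eqref{eq:cf_struct_update}--\eqref{eq:cf_belief_update} fits the standard template of a discrete-time hybrid system: a hybrid state space consisting of a continuous component and a discrete mode, a jump (mode-transition) map, and a family of mode-indexed flow (update) maps. The hybrid state will be $x_t := (\mathfrak{B}_t,s_t)\in\mathcal{P}(\mathcal{Z})\times\mathcal{S}$. The continuous component is the belief, living in the Polish-space-indexed set $\mathcal{P}(\mathcal{Z})$. The discrete mode is the structure index, in the finite set $\mathcal{S}$. The exogenous input will be $(u_t,y_{t+1})$.

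First, I will define the mode-transition map $G:\mathcal{P}(\mathcal{Z})\times\mathcal{S}\to\mathcal{S}$ by $G:=\mathcal{T}_{\mathrm{CF}}$ from \eqref{eq:TCF}. This map is single-valued because ties in \eqref{eq:cf_select} are resolved deterministically, as already argued in Theorem~\ref{thm:wellposedness_cf}.

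Second, I will define, for each mode $s$, the mode-dependent update map $F_s(\mathfrak{B},u,y):=\mathcal{F}_{\theta,s}(\mathfrak{B},u,y)$. Lemma~\ref{lem:belief_invariance} shows that $F_s$ maps into $\mathcal{P}(\mathcal{Z})$ whenever the innovation likelihood is positive.

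Third, I will combine the two pieces into the single map
\begin{equation*}
\Psi(\mathfrak{B},s,u,y):=\big(F_{G(\mathfrak{B},s)}(\mathfrak{B},u,y),\,G(\mathfrak{B},s)\big).
\end{equation*}
Substituting \eqref{eq:cf_struct_update} into \eqref{eq:cf_belief_update} shows that $x_{t+1}=\Psi(x_t,u_t,y_{t+1})$. Hence the coupled recursion is exactly a discrete-time hybrid system driven by $(u_t,y_{t+1})$. Theorem~\ref{thm:wellposedness_cf} then supplies forward invariance and uniqueness of solutions, so the system is well posed on $\mathcal{P}(\mathcal{Z})\times\mathcal{S}$.

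For the second claim, I will consider trajectories along which the mode stays constant, i.e.\ $G(\mathfrak{B}_t,s)=s$. This holds, for instance, under Assumption~\ref{ass:fixed_structure}, or eventually under Lemma~\ref{lem:finite_switching}. On such trajectories $\Psi$ reduces to $\mathfrak{B}_{t+1}=\mathcal{F}_{\theta,s}(\mathfrak{B}_t,u_t,y_{t+1})$, which is \eqref{eq:filter_map_s} of Definition~\ref{def:fixed_structure_dynamics}.

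The main obstacle is mostly interpretive: stating precisely what "for fixed $s$" means. I will read it as restricting to the mode-$s$ flow map $F_s$ (equivalently, to trajectories that remain in mode $s$), rather than as a claim that $G$ is the identity, since $G$ is in general not the identity. I will also need a mild caveat that $\ell_{\theta,s}>0$ along the data, so that Lemma~\ref{lem:belief_invariance} applies at each step.
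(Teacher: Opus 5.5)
Your proposal is correct and follows essentially the same route as the paper: read the one-step map directly off \eqref{eq:cf_struct_update}--\eqref{eq:cf_belief_update}, invoke Lemma~\ref{lem:belief_invariance} to see that it sends $\mathcal{P}(\mathcal{Z})\times\mathcal{S}$ into itself, and specialise to $\mathcal{F}_{\theta,s}$ in a fixed mode. Your explicit jump-map/flow-map packaging and the positivity caveat on $\ell_{\theta,s}$ make it slightly more careful than the paper's version. Two small points: the appeal to Theorem~\ref{thm:wellposedness_cf} is unnecessary, since the paper uses only the lemma. Also, strictly speaking your $G$ should carry $(u,y)$ as arguments, because $\Phi$, and hence $\mathcal{T}_{\mathrm{CF}}$, depends on $(u_t,y_{t+1})$.
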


\begin{pf}
	From \eqref{eq:cf_struct_update}--\eqref{eq:cf_belief_update},
	$
	s_{t+1} = \mathcal{T}_{\mathrm{CF}}(\mathfrak{B}_t,s_t) \in \mathcal{S},
	$
	$
	\mathfrak{B}_{t+1}
	=
	\mathcal{F}_{\theta,s_{t+1}}(\mathfrak{B}_t,u_t,y_{t+1}).
	$
	By Lemma~\ref{lem:belief_invariance},
	$
	\mathfrak{B}_{t+1} \in \mathcal{P}(\mathcal{Z}).
	$
	Thus the map
	$
	(\mathfrak{B}_t,s_t)
	\mapsto
	(\mathfrak{B}_{t+1},s_{t+1})
	$
	is well defined on $\mathcal{P}(\mathcal{Z}) \times \mathcal{S}$.
	For fixed $s$, the update reduces to
	$
	\mathfrak{B}_{t+1}
	=
	\mathcal{F}_{\theta,s}(\mathfrak{B}_t,u_t,y_{t+1}),
	$
	while $s_{t+1}$ evolves via $\mathcal{T}_{\mathrm{CF}}$.
\end{pf}

\begin{thm}[Boundedness and descent under CF]
	\label{thm:cf_boundedness}
	Let $\{(\mathfrak{B}_t,s_t)\}_{t\ge0}$ be generated by
	\eqref{eq:cf_struct_update}--\eqref{eq:cf_belief_update} on
	$\mathcal{P}(\mathcal{Z}) \times \mathcal{S}$. Then, $\forall{t}\ge0$:
	(i) (\emph{Belief invariance})
	$\mathfrak{B}_t \in \mathcal{P}(\mathcal{Z})$.
	(ii) (\emph{Monotone structural improvement})
	$\Phi(\mathfrak{B}_{t+1}, s_{t+1})
	\le
	\Phi(\mathfrak{B}_t, s_t)$.
	(iii) (\emph{Strict descent under mismatch})
	If $s_t$ is structurally mismatched in the sense of
	Definition~\ref{def:struct_mismatch} and \eqref{eq:cf_select}, then
	$\Phi(\mathfrak{B}_{t+1}, s_{t+1})
	<
	\Phi(\mathfrak{B}_t, s_t)$.
\end{thm}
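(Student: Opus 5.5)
Part (i) follows by induction on $t$ using Lemma~\ref{lem:belief_invariance}, in the way already used in Theorem~\ref{thm:wellposedness_cf}. The base case is $\mathfrak{B}_0\in\mathcal{P}(\mathcal{Z})$. For the inductive step, $s_{t+1}=\mathcal{T}_{\mathrm{CF}}(\mathfrak{B}_t,s_t)\in\mathcal{S}$ is a fixed structure once selected. Provided the innovation likelihood $\ell_{\theta,s_{t+1}}(y_{t+1}\mid\mathfrak{B}_t,u_t)$ is positive, which is also implicit in $\Phi$ being finite under Assumption~\ref{ass:struct_improve}, Lemma~\ref{lem:belief_invariance} gives $\mathfrak{B}_{t+1}=\mathcal{F}_{\theta,s_{t+1}}(\mathfrak{B}_t,u_t,y_{t+1})\in\mathcal{P}(\mathcal{Z})$. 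Proposition~\ref{prop:hybrid} can be cited to record that this is the state map of the hybrid system.

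For (ii), I would split the change in score into a structural step and a belief step:
\begin{equation*}
\Phi(\mathfrak{B}_{t+1},s_{t+1})-\Phi(\mathfrak{B}_t,s_t)
=\big[\Phi(\mathfrak{B}_{t+1},s_{t+1})-\Phi(\mathfrak{B}_t,s_{t+1})\big]
+\big[\Phi(\mathfrak{B}_t,s_{t+1})-\Phi(\mathfrak{B}_t,s_t)\big].
\end{equation*}
The second bracket is nonpositive by Lemma~\ref{lem:struct_descent}, since $s_{t+1}$ minimizes $\Phi(\mathfrak{B}_t,\cdot)$ over $\mathcal{S}$. The first bracket compares the same structure before and after the Bayesian correction~\eqref{eq:bayes_update}. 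I would show it is nonpositive by evaluating the score at the already-incorporated observation $y_{t+1}$. Conditioning the predictive law on $y_{t+1}$ reweights mass toward regions where $p_{\theta,s}(y_{t+1}\mid z)$ is large. Concretely, $\int p\,d(\text{posterior})=\mathbb{E}[p^2]/\mathbb{E}[p]\ge\mathbb{E}[p]$ by Cauchy--Schwarz (or Jensen), with expectations taken under the prediction measure $\mathcal{P}_{\theta,s}(\mathfrak{B}_t,u_t)$. So the posterior-predictive likelihood of $y_{t+1}$ is at least $\ell_{\theta,s}(y_{t+1}\mid\mathfrak{B}_t,u_t)$, and hence the score does not increase. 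Summing the two brackets gives (ii).

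For (iii), strictness comes from the structural bracket. If $s_t$ is structurally mismatched, Lemma~\ref{lem:struct_descent} gives $\Phi(\mathfrak{B}_t,s_{t+1})<\Phi(\mathfrak{B}_t,s_t)$, with a gap of at least $\varepsilon$ by the argument in that lemma's proof. Under the hysteresis rule~\eqref{eq:cf_select}, strictness additionally requires that a switch is actually triggered, i.e.\ that the gap exceeds $\delta$. I would use the explicit reference to~\eqref{eq:cf_select} in (iii) to restrict to that case, or take $\delta<\varepsilon$. Adding the nonpositive belief bracket from (ii) preserves the strict inequality.

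The main obstacle is the belief bracket in (ii). $\Phi(\mathfrak{B},s)$ depends implicitly on the next observation, so $\Phi(\mathfrak{B}_{t+1},\cdot)$ naturally refers to $y_{t+2}$, which is unconstrained by the data at time $t$. The cross-time inequality then fails for adversarial $y_{t+2}$. I would therefore fix the convention that $\Phi(\mathfrak{B}_{t+1},s)$ is evaluated against the observation just assimilated, $y_{t+1}$, so that the Cauchy--Schwarz reweighting argument applies. If that convention is not acceptable, I would weaken (ii)--(iii) to the within-step statement $\Phi(\mathfrak{B}_t,s_{t+1})\le\Phi(\mathfrak{B}_t,s_t)$, which Lemma~\ref{lem:struct_descent} already provides.
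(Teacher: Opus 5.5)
Your part (i) is the paper's argument. The gap is in the belief bracket of (ii): the Cauchy--Schwarz step does not apply to the score as the paper defines it. $\Phi(\mathfrak{B}_{t+1},s)=-\log\ell_{\theta,s}(\,\cdot\mid\mathfrak{B}_{t+1},u_{t+1})$ integrates $p_{\theta,s}(y\mid z)$ against the one-step \emph{prediction} $\mathcal{P}_{\theta,s}(\mathfrak{B}_{t+1},u_{t+1})$ (see \eqref{eq:innovation_likelihood}), not against $\mathfrak{B}_{t+1}$. Your identity $\mathbb{E}[p^2]/\mathbb{E}[p]$ instead computes $\int p_{\theta,s_{t+1}}(y_{t+1}\mid z)\,\mathfrak{B}_{t+1}(dz)$. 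The extra transition step can move mass away from where $p(y_{t+1}\mid\cdot)$ is large, so the inequality fails even under your convention. Take the deterministic transition $z^+=z+1$ (or a small-noise Gaussian version), $y=z+v$ with $v\sim\mathcal{N}(0,1)$, $\mathfrak{B}_t=\delta_{-1}$ and $y_{t+1}=0$. Then $\ell=(2\pi)^{-1/2}$ before the update. The posterior $\mathfrak{B}_{t+1}=\delta_0$ predicts $\delta_1$, and the same observation now gets likelihood $(2\pi)^{-1/2}e^{-1/2}$. To rescue the argument you would also have to drop the prediction step, replacing the innovation score at $t+1$ by an in-sample filtered fit. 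But then the chain does not telescope. The $\Phi(\mathfrak{B}_{t+1},\cdot)$ on the left of step $t$ is a different functional from the one $\mathcal{T}_{\mathrm{CF}}$ minimizes at step $t+1$ (predictive, against $y_{t+2}$), and your structural bracket at step $t+1$ needs the latter. What you would have shown is only that the posterior explains the assimilated observation at least as well as the prior predictive did. That is true, but it is not monotonicity of one score sequence.

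The paper's own proof of (ii)--(iii) never handles the belief bracket. It derives only the within-step inequality $\Phi(\mathfrak{B}_t,s_{t+1})\le\Phi(\mathfrak{B}_t,s_t)$ from Lemma~\ref{lem:struct_descent}, and the strict version from Assumption~\ref{ass:struct_improve}. That is exactly your fallback. If you want the literal cross-time claim under the paper's hypotheses, the missing idea is that Assumption~\ref{ass:struct_improve} is uniform in $\mathfrak{B}$. Suppose $\mathcal{S}^\star\neq\varnothing$, which the paper's proof of (iii) tacitly needs anyway. Then every $\mathfrak{B}$ has $\min_s\Phi(\mathfrak{B},s)=0$ and $\arg\min_s\Phi(\mathfrak{B},s)=\mathcal{S}^\star$. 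So $s_{t+1}=\mathcal{T}_{\mathrm{CF}}(\mathfrak{B}_t,s_t)\in\mathcal{S}^\star$, and $\Phi(\mathfrak{B}_{t+1},s_{t+1})=0\le\Phi(\mathfrak{B}_t,s_t)$, with strict inequality when $s_t\notin\mathcal{S}^\star$. The belief bracket vanishes identically, and no reweighting inequality is needed. Without that assumption (for instance for the concrete score $-\log\ell$), your adversarial-$y_{t+2}$ objection is correct, and only the within-step version is provable.

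In (iii), do not rely on the $\varepsilon$-gap from the proof of Lemma~\ref{lem:struct_descent}. Definition~\ref{def:struct_mismatch} bounds an asymptotic KL divergence to $\mathbb{P}^\star$ and says nothing about the scores of competing structures at time $t$, so ``take $\delta<\varepsilon$'' has no footing. Strictness should come from Assumption~\ref{ass:struct_improve}, with $\mathcal{T}_{\mathrm{CF}}$ acting as a pure argmin selector. Your caveat that a positive hysteresis margin can block the switch, and with it strict descent, is right: $\Phi(\mathfrak{B}_t,s_t)-\min_s\Phi(\mathfrak{B}_t,s)$ need not exceed $\delta$.
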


\begin{pf}
	From \eqref{eq:cf_struct_update}--\eqref{eq:cf_belief_update},
	\[
	(\mathfrak{B}_{t+1}, s_{t+1})
	=
	\big(
	\mathcal{F}_{\theta,s_{t+1}}(\mathfrak{B}_t,u_t,y_{t+1}),
	\mathcal{T}_{\mathrm{CF}}(\mathfrak{B}_t,s_t)
	\big),\quad t\ge0.
	\]
	\emph{(i) Belief invariance.}
	By Lemma~\ref{lem:belief_invariance},
	$\mathcal{F}_{\theta,s}(\mathcal{P}(\mathcal{Z})) \subseteq \mathcal{P}(\mathcal{Z})$ for all $s\in\mathcal{S}$.
	Hence $\mathfrak{B}_{t+1}\in\mathcal{P}(\mathcal{Z})$ whenever $\mathfrak{B}_t\in\mathcal{P}(\mathcal{Z})$,
	and thus $\mathfrak{B}_t\in\mathcal{P}(\mathcal{Z})$ for all $t\ge0$.	
	\emph{(ii) Monotone structural improvement.}
	From \eqref{eq:cf_struct_update}, 
	$s_{t+1}\in\arg\min_{s\in\mathcal{S}}\Phi(\mathfrak{B}_t,s)$.
	By Lemma~\ref{lem:struct_descent},
	$\Phi(\mathfrak{B}_t,s_{t+1}) \le \Phi(\mathfrak{B}_t,s_t)$ for all $t\ge0$.
	\emph{(iii) Strict descent under mismatch.}
	If $s_t$ is structurally mismatched, then by Assumption~\ref{ass:struct_improve},
	$\exists\,\tilde{s}\in\mathcal{S}$ such that 
	$\Phi(\mathfrak{B}_t,\tilde{s}) < \Phi(\mathfrak{B}_t,s_t)$.
	Since $s_{t+1}\in\arg\min_{s\in\mathcal{S}}\Phi(\mathfrak{B}_t,s)$,
	$
	\Phi(\mathfrak{B}_t,s_{t+1})
	\le \Phi(\mathfrak{B}_t,\tilde{s})
	< \Phi(\mathfrak{B}_t,s_t).
	$
\end{pf}

	\subsection{Behavioral consequence (core corollary)}
\label{sec:layer3}

\begin{cor}[Fixed-structure reduction]
	\label{cor:eventual_fixed_filtering}
	Suppose the CF selection rule \eqref{eq:cf_select} is implemented with a
	hysteresis margin $\delta>0$.
	If $\exists{s}^\star\in\mathcal{S}$, $\Delta>\delta$, and $T_0\in\mathbb{N}$
	such that
	\begin{equation}
		\Phi(\mathfrak{B}_t,s^\star)
		\;\le\;
		\Phi(\mathfrak{B}_t,s) - \Delta,
		\qquad
		\forall s\in\mathcal{S}\setminus\{s^\star\},\;\forall t\ge T_0,
		\label{eq:score_separation}
	\end{equation}
	then the structure sequence $\{s_t\}$ switches only finitely many times and,
	for all sufficiently large $t$, satisfies $s_t=s^\star$.
	Consequently, the coupled CF recursion
	\eqref{eq:cf_struct_update}--\eqref{eq:cf_belief_update} reduces after a finite
	transient to the fixed-structure Bayesian filter
	\begin{equation}
		\mathfrak{B}_{t+1}
		=
		\mathcal{F}_{\theta,s^\star}(\mathfrak{B}_t,u_t,y_{t+1}),
		\qquad t\ \text{sufficiently large}.
		\label{eq:eventual_fixed_recursion}
	\end{equation}
\end{cor}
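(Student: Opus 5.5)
The plan is to reduce the statement directly to Lemma~\ref{lem:finite_switching}, identifying $\Phi_t(s)$ with $\Phi(\mathfrak{B}_t,s)$. The separation condition~\eqref{eq:score_separation}, with $\Delta>\delta$ and threshold $T_0$, is exactly the hypothesis of that lemma, now with the hysteresis margin $\delta$ explicitly fixed in $(0,\Delta)$. First, I would re-run the two-case argument under rule~\eqref{eq:cf_select}, so that the dependence on $\delta<\Delta$ is visible. Fix $t\ge T_0$; then $\min_{s}\Phi(\mathfrak{B}_t,s)=\Phi(\mathfrak{B}_t,s^\star)$, and $s^\star$ is the unique minimizer, so tie-breaking by $\mathcal{T}_{\mathrm{CF}}$ plays no role.

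\emph{Case $s_t\neq s^\star$.} By~\eqref{eq:score_separation}, $\Phi(\mathfrak{B}_t,s_t)\ge \Phi(\mathfrak{B}_t,s^\star)+\Delta>\min_s\Phi(\mathfrak{B}_t,s)+\delta$. The hysteresis test fails, so the rule switches to the argmin, giving $s_{t+1}=s^\star$.

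\emph{Case $s_t=s^\star$.} Here $\Phi(\mathfrak{B}_t,s_t)=\min_s\Phi(\mathfrak{B}_t,s)\le \min_s\Phi(\mathfrak{B}_t,s)+\delta$. The hysteresis test passes, so $s_{t+1}=s_t=s^\star$.

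Together the two cases give $s_t=s^\star$ for all $t\ge T_0+1$. At most $T_0+1$ switches can occur, one per step on $\{0,\dots,T_0\}$, so the number of switches is finite.

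Second, I would obtain the reduction~\eqref{eq:eventual_fixed_recursion} by substituting $s_{t+1}=s^\star$ into the belief update~\eqref{eq:cf_belief_update}. For $t\ge T_0$ this gives $\mathfrak{B}_{t+1}=\mathcal{F}_{\theta,s^\star}(\mathfrak{B}_t,u_t,y_{t+1})$. Well-posedness of every step, so that the trajectory exists and stays in $\mathcal{P}(\mathcal{Z})$, follows from Theorem~\ref{thm:wellposedness_cf} together with Lemma~\ref{lem:belief_invariance}. After time $T_0$ the coupled hybrid recursion of Proposition~\ref{prop:hybrid} therefore coincides with the fixed-structure filter of Definition~\ref{def:fixed_structure_dynamics}.

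The only delicate point is consistency between the rule actually used and the operator $\mathcal{T}_{\mathrm{CF}}$. Equation~\eqref{eq:cf_struct_update} is stated as a pure argmin selection, whereas the corollary assumes the hysteresis rule~\eqref{eq:cf_select}. I would make explicit that $\mathcal{T}_{\mathrm{CF}}$ is taken to implement~\eqref{eq:cf_select}. Under the strict separation $\Delta>\delta\ge 0$, both rules select $s^\star$ for every $t\ge T_0$, so the conclusion holds under either interpretation. I would also note that the timing convention, with $s_{t+1}$ selected from $\mathfrak{B}_t$, is what gives the index shift $T_0+1$.
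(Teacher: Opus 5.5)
Your proposal is correct and follows essentially the same argument as the paper: for $t\ge T_0$ you show that $s^\star$ is absorbing under the hysteresis test, and that any $s_t\neq s^\star$ fails the test because $\Delta>\delta$. Hence $s_t=s^\star$ for all $t\ge T_0+1$, and substituting this into \eqref{eq:cf_belief_update} gives the fixed-structure recursion; your added remarks on uniqueness of the minimizer and on the argmin reading of \eqref{eq:cf_struct_update} are harmless clarifications that the paper only gestures at.
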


\begin{pf}
	Let the hysteresis version of \eqref{eq:cf_select} be written explicitly as:
	for each $t\ge 0$,
\begin{equation}
		s_{t+1}=
		\begin{cases}
			s_t, & \text{if } \Phi(\mathfrak{B}_t,s_t)\le \min\limits_{s\in\mathcal{S}}\Phi(\mathfrak{B}_t,s)+\delta,\\[1mm]
			\arg\min\limits_{s\in\mathcal{S}}\Phi(\mathfrak{B}_t,s), & \text{otherwise},
		\end{cases}
		\label{eq:hyst_rule}
	\end{equation}
	with $\delta>0$.
	(Any equivalent ``switch only if improvement exceeds $\delta$'' rule yields the same conclusion.)
	Assume \eqref{eq:score_separation}. Fix any $t\ge T_0$. Then $\forall{s}\neq s^\star$,
	$
	\Phi(\mathfrak{B}_t,s^\star)\le \Phi(\mathfrak{B}_t,s)-\Delta
	\quad\Longrightarrow\quad
	\min_{s\in\mathcal{S}}\Phi(\mathfrak{B}_t,s)=\Phi(\mathfrak{B}_t,s^\star).
	$
	In particular, if $s_t=s^\star$, then
	$
	\Phi(\mathfrak{B}_t,s_t)-\min_{s\in\mathcal{S}}\Phi(\mathfrak{B}_t,s)
	=
	\Phi(\mathfrak{B}_t,s^\star)-\Phi(\mathfrak{B}_t,s^\star)=0\le \delta,
	$
	and therefore \eqref{eq:hyst_rule} gives $s_{t+1}=s_t=s^\star$.
	This shows that $s^\star$ is \emph{absorbing} after time $T_0$.
	It remains to show that $s^\star$ is reached in finite time.
	For any $t\ge T_0$ with $s_t\neq s^\star$, separation implies
	$
	\Phi(\mathfrak{B}_t,s^\star)\le \Phi(\mathfrak{B}_t,s_t)-\Delta
	\quad\Longrightarrow\quad
	\Phi(\mathfrak{B}_t,s_t)>\min_{s\in\mathcal{S}}\Phi(\mathfrak{B}_t,s)+\delta,
	$
	because $\Delta>\delta$ and $\min_s \Phi(\mathfrak{B}_t,s)=\Phi(\mathfrak{B}_t,s^\star)$.
	Hence the first case in \eqref{eq:hyst_rule} cannot occur; a switch is triggered and
	$
	s_{t+1}=\arg\min_{s\in\mathcal{S}}\Phi(\mathfrak{B}_t,s)=s^\star.
	$
	Thus, regardless of the pre-$T_0$ history, we obtain $s_{T_0+1}=s^\star$, and by
	absorption, $s_t=s^\star$, $\forall{t}\ge T_0+1$.
	In particular, the number of switches after $T_0$ is at most one, so the total number
	of switches is finite.
	
	Finally, substituting $s_t=s^\star$, $\forall{t}\ge T_0+1$, into the coupled update
	\eqref{eq:cf_belief_update} yields the fixed-structure recursion
	\eqref{eq:eventual_fixed_recursion}.
\end{pf}


\begin{rem}[Connection to Experiment~\ref{sec:exp3}]
	\label{rem:exp3_connection}
	Experiment~\ref{sec:exp3} (negative control) is designed so that the true observation
	mechanism remains consistent with $s_{\mathrm{lin}}$; empirically,
	$\Phi(\mathfrak{B}_t,s_{\mathrm{lin}})$ remains persistently lower than
	$\Phi(\mathfrak{B}_t,s_{\mathrm{sat}})$, so CF rapidly settles on $s_{\mathrm{lin}}$
	and behaves as a standard fixed-\texttt{LIN} Bayesian filter thereafter. 
\end{rem}


\begin{cor}[Non-intrusiveness]
	\label{cor:non_intrusive}
	Suppose $\exists{s}^\star \in \mathcal{S}$ such that
	\begin{equation}
		\Phi(\mathfrak{B}_t,s^\star)
		\;\le\;
		\Phi(\mathfrak{B}_t,s),
		\qquad
		\forall s \in \mathcal{S},\ \forall t \ge 1 .
		\label{eq:non_intrusive_condition}
	\end{equation}
	Then the CF mechanism is non-intrusive, in the sense
	that $s_t = s^\star$, $\forall{t}\ge 1$, and the coupled update
	\eqref{eq:cf_struct_update}--\eqref{eq:cf_belief_update} reduces to standard
	Bayesian filtering under the fixed structure $s^\star$.
\end{cor}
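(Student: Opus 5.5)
The plan is an induction on $t$ using the selection rule~\eqref{eq:cf_select} together with the deterministic tie-breaking operator $\mathcal{T}_{\mathrm{CF}}$ from~\eqref{eq:TCF}. Condition~\eqref{eq:non_intrusive_condition} says that, for every $t\ge 1$,
\[
\Phi(\mathfrak{B}_t,s^\star)=\min_{s\in\mathcal{S}}\Phi(\mathfrak{B}_t,s).
\]
The key observation is that with hysteresis $\delta\ge 0$, the first branch of~\eqref{eq:cf_select} (equivalently~\eqref{eq:hyst_rule}) applies whenever the incumbent is a minimizer. This holds because
\[
\Phi(\mathfrak{B}_t,s^\star)-\min_{s}\Phi(\mathfrak{B}_t,s)=0\le\delta .
\]
So once $s_t=s^\star$ for some $t\ge 1$, we get $s_{t+1}=s_t=s^\star$. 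This is exactly the absorption step in the proof of Corollary~\ref{cor:eventual_fixed_filtering}, with $\Delta$ replaced by $0$; the margin $\Delta>\delta$ was only needed there to force a switch, not to retain $s^\star$. The inductive step therefore needs no score separation at all.

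For the base case, $s_1=\mathcal{T}_{\mathrm{CF}}(\mathfrak{B}_0,s_0)$ is determined by the scores at $t=0$. Condition~\eqref{eq:non_intrusive_condition} is stated only for $t\ge1$, so the proof must secure $s_1=s^\star$ separately. I will show that $s^\star\in\arg\min_s\Phi(\mathfrak{B}_0,s)$ by reading~\eqref{eq:non_intrusive_condition} as holding at the scoring time that produces $s_1$. This is consistent with the indexing convention $c_t^{(s)}=-\log\ell_{\theta,s}(y_t\mid\mathfrak{B}_{t-1},u_{t-1})$ of Proposition~\ref{prop:cf_switch_consistency}. Then, if $s_0=s^\star$, the hysteresis branch keeps it; otherwise a switch selects a minimizer. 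I will make explicit that $\mathcal{T}_{\mathrm{CF}}$ is taken to select $s^\star$ whenever $s^\star$ is among the minimizers, which is the natural deterministic tie-break consistent with $\mathcal{S}^\star$. With these conventions in place, $s_1=s^\star$.

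The conclusion then follows by substitution. Having $s_t=s^\star$ for all $t\ge1$ and inserting this into~\eqref{eq:cf_belief_update} gives $\mathfrak{B}_{t+1}=\mathcal{F}_{\theta,s^\star}(\mathfrak{B}_t,u_t,y_{t+1})$. This is the fixed-structure filter of Definition~\ref{def:fixed_structure_dynamics}. Well-posedness of each step is supplied by Lemma~\ref{lem:belief_invariance} and Theorem~\ref{thm:wellposedness_cf}.

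The main obstacle is the role of ties, not the induction itself. Because~\eqref{eq:non_intrusive_condition} is only a weak inequality, another structure $s'$ may achieve the same minimal score. With the pure argmin rule ($\delta=0$, switch branch) and an arbitrary tie-break, the selection could jump to $s'$. The argument must therefore rest on the fact that the first branch of~\eqref{eq:cf_select} retains a minimizing incumbent, and on the stated tie-breaking convention for the base case. I will state both dependencies explicitly in the proof rather than hide them.
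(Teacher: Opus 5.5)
Your inductive step is sound: for $t\ge1$ with $s_t=s^\star$, hypothesis \eqref{eq:non_intrusive_condition} makes the incumbent a minimizer, so the first branch of \eqref{eq:cf_select} retains it.

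The gap is in the base case, at the sentence ``otherwise a switch selects a minimizer.'' Under \eqref{eq:cf_select}, a switch at $t=0$ happens only if $\Phi(\mathfrak{B}_0,s_0)>\min_{s}\Phi(\mathfrak{B}_0,s)+\delta$. Suppose $s_0\neq s^\star$ but $\Phi(\mathfrak{B}_0,s_0)\le\Phi(\mathfrak{B}_0,s^\star)+\delta$, either through a tie or through any gap below $\delta$. Then the first branch fires and $s_1=s_0\neq s^\star$. Your tie-breaking convention for $\mathcal{T}_{\mathrm{CF}}$ is never invoked, because it only operates inside the switch branch.

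The weak inequality \eqref{eq:non_intrusive_condition} does not exclude this, even after your extension to $t=0$. As you noted yourself, the margin $\Delta>\delta$ in Corollary~\ref{cor:eventual_fixed_filtering} is exactly what forces a switch, and no such margin is available here. Worse, the retention mechanism that drives your induction then works against you. If $s_0=s'\neq s^\star$ and $\Phi(\mathfrak{B}_t,s')\le\Phi(\mathfrak{B}_t,s^\star)+\delta$ along the trajectory for all $t\ge0$, then $s_t=s'$ for every $t$, even when $s^\star$ is the unique minimizer. So under the hysteresis reading the conclusion does not follow from the stated hypotheses, and the two conventions you combine are in tension.

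To close the argument you must commit to one reading.
\begin{itemize}
\item[(a)] Keep your $t=0$ extension and add either $s_0=s^\star$ or a margin $\Phi(\mathfrak{B}_0,s)>\Phi(\mathfrak{B}_0,s^\star)+\delta$ for all $s\neq s^\star$. Then $s_1=s^\star$, and your induction finishes the proof.
\item[(b)] Treat the selection as the pure argmin map $\mathcal{T}_{\mathrm{CF}}$ in \eqref{eq:cf_struct_update}, with the tie-break returning $s^\star$ whenever it is a minimizer.
\end{itemize}

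Reading (b) is what the paper's one-line proof effectively does. There, $s^\star$ globally minimizes $\Phi(\mathfrak{B}_t,\cdot)$ at each $t$, so $s_{t+1}=s^\star$ at every step directly, with no induction and no special base case. That proof also silently uses minimality at $t=0$ (it asserts it ``$\forall t$'') and ignores the hysteresis branch. Your points about the $t=0$ index and about ties are therefore fair criticisms of the statement, not departures from it.

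One small correction: with $\delta=0$, \eqref{eq:cf_select} still keeps a minimizing incumbent through its first branch. The risk of jumping to a tied $s'$ arises only for $\mathcal{T}_{\mathrm{CF}}$ without incumbent preference.
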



\begin{pf}
	Condition~\eqref{eq:non_intrusive_condition}  makes $s^\star$ a global minimizer of $\Phi(\mathfrak{B}_t,s)$, 
	$\forall{t}$. Hence \eqref{eq:cf_select} gives $s_{t+1}=s^\star$, $\forall{t}$,
	and \eqref{eq:cf_belief_update} reduces to the fixed-structure recursion 
	\eqref{eq:eventual_fixed_recursion}.
\end{pf}

\section{Numerical Experiments}
\label{sec:experiments}

Four experiments evaluate CF across complementary 
mismatch scenarios: structural mismatch in the latent 
dynamics (Experiment~\ref{sec:exp1}), an abrupt observation-model 
shift (Experiment~\ref{sec:exp2}), a negative control with no shift 
(Experiment~\ref{sec:exp3}), and a two-dimensional latent state 
(Experiment~\ref{sec:exp44}). Together, they test the three properties 
established in Section~\ref{sec:CFDeeepSSSM}: accuracy 
under mismatch, correctness of structural adaptation, and 
non-intrusiveness under correct specification.

Three metrics are reported throughout. State-estimation 
accuracy is measured by
\begin{align}
	\label{eq:RMSE}
	\mathrm{RMSE}
	:=
	\left(
	\frac{1}{T}\sum_{t=1}^{T}\|\hat{z}_t - z_t\|^2
	\right)^{1/2},
\end{align}
where $\hat{z}_t := \mathbb{E}_{\mathfrak{B}_t}[z_t]$.
Predictive consistency is quantified by the 
time-averaged innovation score
\begin{align}
	\label{expall:Mean}
	\bar{\Phi}
	:=
	\frac{1}{T-1}\sum_{t=1}^{T-1}\Phi_t(s_t),
\end{align}
and structural adaptation by the switch rate
\begin{equation}
	\label{eq:switch_rate}
	\rho_{\mathrm{sw}}
	:=
	\frac{1}{T-1}\sum_{t=1}^{T-1}
	\mathbf{1}\{s_{t+1}\neq s_t\}.
\end{equation}
All metrics are averaged over $M=50$ Monte~Carlo runs.


\begin{figure*}[t]
	\centering
	\includegraphics[width=1.16\textwidth]{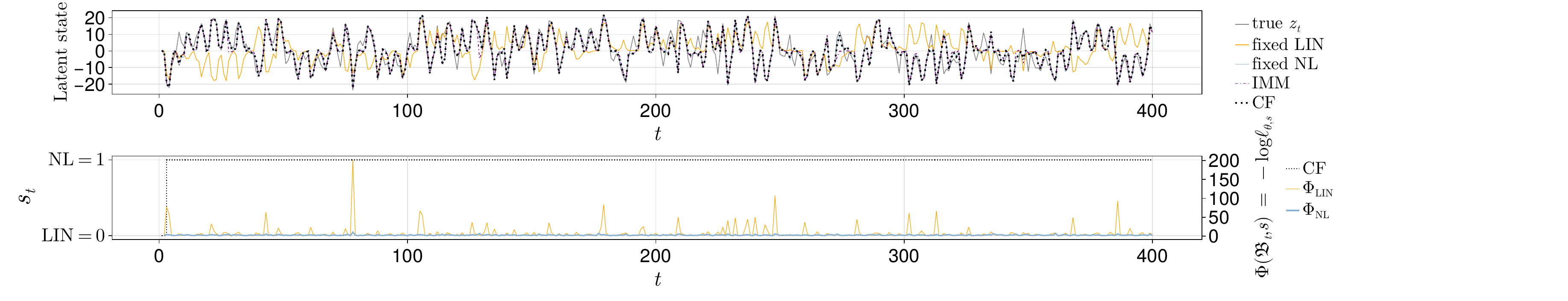} 
\caption{%
	\textbf{Experiment~\ref{sec:exp1}.}
	\textit{Top:} True state $z_t$ and estimates from
	\texttt{Fixed~LIN}, \texttt{Fixed~NL}, IMM, and CF,
	initialised at $s_0=s_{\mathrm{lin}}$.
	\textit{Bottom:} Structure sequence $s_t$
	(LIN$=0$, NL$=1$) and innovation scores
	$\Phi_{\mathrm{LIN}}$, $\Phi_{\mathrm{NL}}$;
	CF commits to $s_{\mathrm{nl}}$ at $t\approx 5$
	and produces no further switches.
}
	\label{fig:exp4A_benchmark}
\end{figure*}

\begin{figure*}[t]
	\centering
	\includegraphics[width=1.15\textwidth]{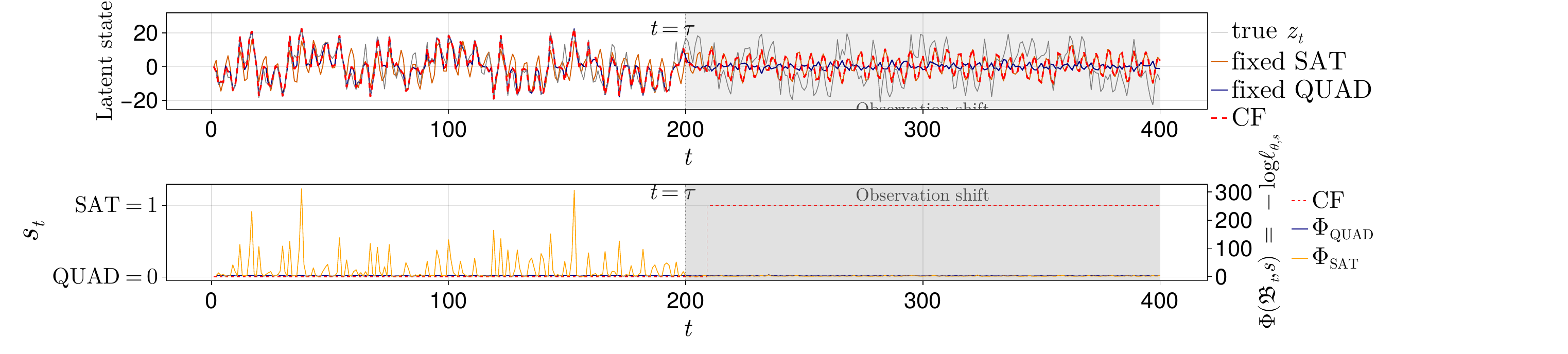} 
	\caption{\textbf{Experiment~\ref{sec:exp2}:}
		Top: True state $z_t$ and estimates from fixed QUAD, fixed SAT, and CF. 
		The dashed line marks the change time $\tau$. 
		Bottom: Selected structure $s_t$ (QUAD$=0$, SAT$=1$) and scores $\Phi_{\mathrm{QUAD}}$, $\Phi_{\mathrm{SAT}}$.}
	\label{fig:exp4B_benchmark}
\end{figure*}

\begin{figure*}[t]
	\centering
	\includegraphics[width=1.15\textwidth]{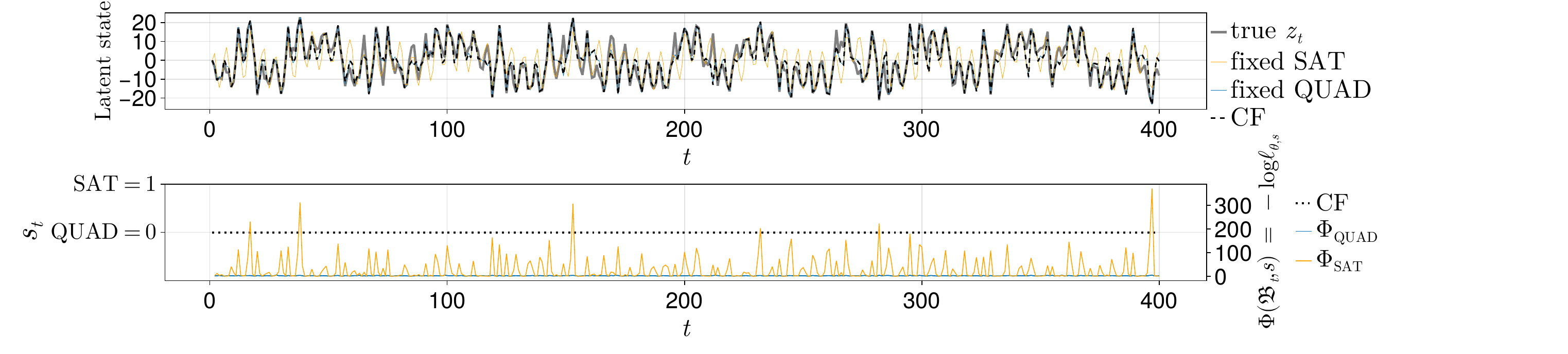} 
\caption{%
	\textbf{Experiment~\ref{sec:exp3}}
	(negative control, no observation shift).
	\textit{Top:} True state $z_t$ and estimates from
	\texttt{Fixed-QUAD}, \texttt{Fixed-SAT}, and CF (proposed);
	CF overlaps with \texttt{Fixed-QUAD} throughout.
	\textit{Bottom:} Structure sequence $s_t$
	(QUAD$=0$, SAT$=1$) and innovation scores
	$\Phi_{\mathrm{QUAD}}$, $\Phi_{\mathrm{SAT}}$;
	the selected structure remains at QUAD for all $t$
	and $\Phi_{\mathrm{QUAD}}$ stays below
	$\Phi_{\mathrm{SAT}}$ throughout.
}
\label{fig:exp4C_benchmark}
\end{figure*}


\subsection{Experiment~4.1: Structural mismatch in latent dynamics}
\label{sec:exp1}

The data-generating process is the canonical nonlinear 
stochastic growth model~\cite{Arulampalam2002PF},
widely used as a benchmark for nonlinear filtering
methods~\cite{2012ParticleFiltering,Kitagawa1996MonteCarlo,1992M}.
The scalar latent state $z_t \in \mathbb{R}$ evolves as
\begin{align}
	z_{t+1} &= \tfrac{1}{2}\,z_t
	+ \frac{25\,z_t}{1+z_t^2}
	+ 8\cos(1.2\,t) + w_t,
	\label{eq:exp4A_true_dyn}\\
	y_t     &= \tfrac{z_t^2}{20} + v_t,
	\label{eq:exp4A_obs}
\end{align}
with $w_t\sim\mathcal{N}(0,\sigma_w^2)$,
$v_t\sim\mathcal{N}(0,\sigma_v^2)$, and
$z_0\sim\mathcal{N}(0,\sigma_0^2)$.

\smallskip
\noindent\textbf{Candidate structures.}
Two competing transition hypotheses are considered:
$\mathcal{S}:=\{s_{\mathrm{lin}},\,s_{\mathrm{nl}}\}$.
Under $s_{\mathrm{lin}}$, the transition follows a 
linear--Gaussian model
$z_{t+1}\sim\mathcal{N}(\alpha z_t,\hat{\sigma}_w^2)$,
which cannot represent the nonlinear 
dynamics~\eqref{eq:exp4A_true_dyn} and thus induces 
structural mismatch.
Under $s_{\mathrm{nl}}$, the transition matches the 
true process,
$z_{t+1}\sim\mathcal{N}\!\bigl(
\tfrac{1}{2}z_t+\tfrac{25z_t}{1+z_t^2}
+8\cos(1.2t),\,\hat{\sigma}_w^2\bigr)$.
Both structures share the quadratic observation 
model~\eqref{eq:exp4A_obs}, so mismatch is isolated 
to the latent dynamics.

\smallskip
\noindent\textbf{Implementation.}
Each structure-conditioned belief is propagated via a
bootstrap particle filter with $N_p=2500$ particles.
The CF selection rule~\eqref{eq:cf_select} is applied
to the $W=10$-step windowed average
$\bar{\Phi}_t^W(s):=W^{-1}\sum_{k=0}^{W-1}\Phi_{t-k}(s)$
with hysteresis margin $\delta=1.0$, consistent with
Corollary~\ref{cor:eventual_fixed_filtering}.
The experiment is initialised at $s_0=s_{\mathrm{lin}}$
to test structural recovery from an incorrect starting
point.
Three methods are compared: \texttt{Fixed~LIN},
\texttt{Fixed~NL}, and the IMM
filter~\cite{blom1988interacting} ($p_{ii}=0.95$).

\smallskip
\noindent\textbf{Results.}
Figure~\ref{fig:exp4A_benchmark} shows that CF recovers 
the accuracy of \texttt{Fixed~NL} after a single structural 
transition at $t\approx5$, producing no further switches 
over the remaining horizon ($\rho_{\mathrm{sw}}=0.011$).
The innovation scores in the bottom panel make the 
mechanism transparent: $\Phi_{\mathrm{nl}}$ falls 
persistently below $\Phi_{\mathrm{lin}}$ after the 
initial transient, so the hysteresis condition is met 
exactly once and the structure locks to $s_{\mathrm{nl}}$.
This confirms the one-step structural descent property
(Lemma~\ref{lem:struct_descent}) and the finite-switching 
guarantee (Corollary~\ref{cor:eventual_fixed_filtering}).

IMM achieves accuracy comparable to \texttt{Fixed~NL}, 
but does so through probabilistic model mixing rather 
than a hard structural commitment. CF, by contrast, 
identifies and commits to the correct structure after 
a short transient, illustrating the distinction
$\mathcal{B}_{\mathrm{IMM}}\subsetneq\mathcal{B}_{\mathrm{CF}}$
(Proposition~\ref{prop:belief_reachability}).
Quantitative results are summarised in 
Table~\ref{tab:all_experiments}.

\begin{table}[t]
	\caption{%
		Performance across all experiments
		($M=50$ Monte~Carlo runs, $N_p=2500$ particles, 
		$T=400$; Exp.~4.4: $N_p=1000$, $T=200$, $M=100$).
		$\dagger$~IMM: self-transition probability $p_{ii}=0.95$.
		Lower is better for all metrics.
	}
	\label{tab:all_experiments}
	\centering
	\renewcommand{\arraystretch}{1.10}
	\setlength{\tabcolsep}{4pt}
	\scriptsize
	\begin{tabular}{llccc}
		\hline\hline
		Exp. & Method
		& RMSE $\downarrow$
		& $\bar{\Phi}$ $\downarrow$
		& $\rho_{\mathrm{sw}}$ \\
		\hline
		\multirow{4}{*}{4.1}
		& Fixed LIN              & 13.463 & 7.947  & --    \\
		& Fixed NL               & 10.273 & 4.192  & --    \\
		& IMM$^\dagger$          & 10.534 & --     & --    \\
		& \textbf{CF (ours)}     & \textbf{10.688}
		& \textbf{4.168}
		& \textbf{0.011} \\
		\hline
		\multirow{3}{*}{4.2}
		& Fixed-QUAD             & 8.415  & --     & --    \\
		& Fixed-SAT              & 8.291  & --     & --    \\
		& \textbf{CF (ours)}     & \textbf{7.408}
		& \textbf{2.071}
		& \textbf{0.003} \\
		\hline
		\multirow{3}{*}{4.3}
		& Fixed-QUAD             & 4.412  & --     & --    \\
		& Fixed-SAT              & 7.230  & --     & --    \\
		& \textbf{CF (ours)}     & \textbf{4.413}
		& \textbf{2.599}
		& \textbf{0.000} \\
		\hline
		\multirow{3}{*}{4.4}
		& Fixed LIN              & 18.665 & 22.479 & --    \\
		& Fixed NL               &  7.105 &  5.702 & --    \\
		& \textbf{CF (ours)}     & \textbf{7.157}
		& \textbf{5.763}
		& \textbf{0.005} \\
		\hline\hline
	\end{tabular}
\end{table}


\subsection{Experiment~4.2: Abrupt observation-model shift}
\label{sec:exp2}

This experiment tests whether CF detects and adapts to 
an abrupt change in the observation structure at an 
unknown time $\tau$, while the latent 
dynamics~\eqref{eq:exp4A_true_dyn} remain fixed 
throughout.

\smallskip
\noindent\textbf{Candidate structures.}
Two candidate observation models are 
considered: $\mathcal{S}:=\{s_{\mathrm{quad}},
\,s_{\mathrm{sat}}\}$,
where $s_{\mathrm{quad}}$ denotes the quadratic and 
$s_{\mathrm{sat}}$ the saturating observation structure.
Under $s_{\mathrm{quad}}$,
\begin{equation}
	y_t \sim \mathcal{N}\!\left(
	\frac{z_t^2}{20},\,\hat{\sigma}_v^2\right),
	\label{eq:exp2_model_obs_quad}
\end{equation}
while under $s_{\mathrm{sat}}$,
\begin{equation}
	y_t \sim \mathcal{N}\!\left(
	\tanh\!\left(\frac{z_t^2}{20}\right),\,
	\hat{\sigma}_v^2\right).
	\label{eq:exp2_model_obs_sat_benchmark}
\end{equation}
The true observation process 
follows~\eqref{eq:exp2_model_obs_quad} for $t<\tau$ 
and switches to~\eqref{eq:exp2_model_obs_sat_benchmark} 
at $\tau=200$. Both candidate structures use the same 
latent dynamics~\eqref{eq:exp4A_true_dyn}, so mismatch 
is isolated to the observation model.
Three methods are compared: \texttt{Fixed-QUAD}
($s_t\equiv s_{\mathrm{quad}}$),
\texttt{Fixed-SAT} ($s_t\equiv s_{\mathrm{sat}}$), 
and CF.

\smallskip
\noindent\textbf{Implementation.}
Score evaluations use $N_p=2000$ particles.
The CF selection rule~\eqref{eq:cf_select} is applied 
to $W=10$-step windowed average scores with hysteresis 
margin $\delta=1.0$, consistent with
Corollary~\ref{cor:eventual_fixed_filtering}     

\smallskip
\noindent\textbf{Results.}
Figure~\ref{fig:exp4B_benchmark} illustrates the 
two-phase behaviour induced by the observation shift.

\emph{Before $t=\tau$:}
The score ordering $\Phi_{\mathrm{quad}} < 
\Phi_{\mathrm{sat}}$ is maintained throughout, so the 
hysteresis condition is never triggered and CF produces 
zero spurious switches.
The CF estimate closely tracks \texttt{Fixed-QUAD} and 
the true state $z_t$, consistent with the 
non-intrusiveness guarantee
(Corollary~\ref{cor:non_intrusive}): when the active  
structure is already predictively consistent, CF reduces 
exactly to the corresponding fixed-structure filter.

\emph{After $t=\tau$:}
The shift to~\eqref{eq:exp2_model_obs_sat_benchmark} 
immediately reverses the score ordering — 
$\Phi_{\mathrm{quad}}$ rises sharply while 
$\Phi_{\mathrm{sat}}$ falls — and CF responds with a 
single structural transition to $s_{\mathrm{sat}}$, 
committing to it for all remaining steps 
($\rho_{\mathrm{sw}}=0.0025$).
This confirms finite switching
(Corollary~\ref{cor:eventual_fixed_filtering}) and 
the one-step structural descent property
(Lemma~\ref{lem:struct_descent}).
\texttt{Fixed-QUAD}, by contrast, becomes persistently 
biased because it continues to use the mismatched 
model~\eqref{eq:exp2_model_obs_quad}.

It is worth noting that neither CF nor \texttt{Fixed-SAT} 
fully recovers the large-amplitude variations of the true 
state after the shift. This is not a limitation of CF 
itself, but an inherent consequence of the saturating 
map~\eqref{eq:exp2_model_obs_sat_benchmark} being 
many-to-one: the latent state is not globally 
identifiable from observations after $t=\tau$. CF 
converges to the best predictively consistent model 
available, as guaranteed by 
Theorem~\ref{thm:cf_boundedness}.
Quantitative results are reported in 
Table~\ref{tab:all_experiments}.


\subsection{Experiment~4.3: No observation shift 
	(negative control)}
\label{sec:exp3}

This experiment asks whether CF remains non-intrusive 
when no structural change occurs — that is, when the 
active structure is already predictively consistent 
throughout the horizon. It uses the same latent 
dynamics~\eqref{eq:exp4A_true_dyn} and candidate 
observation structures as Experiment~\ref{sec:exp2}, 
but the true observation process coincides with the 
quadratic model~\eqref{eq:exp2_model_obs_quad} for 
all $t$: no shift occurs at any time.

\smallskip
\noindent\textbf{Implementation.}
The CF mechanism uses the same $W=10$-step windowed 
scores and hysteresis margin $\delta=1.0$ as in 
Experiments~\ref{sec:exp1} and~\ref{sec:exp2}, 
with no additional penalty or persistence counter.
This ensures that any difference in behaviour relative 
to Experiment~\ref{sec:exp2} is attributable solely 
to the absence of a shift, not to a change in 
hyperparameters.

\smallskip
\noindent\textbf{Results.}
Figure~\ref{fig:exp4C_benchmark} confirms that CF 
produces zero structural switches throughout the 
horizon ($\rho_{\mathrm{sw}}=0.000$). The score 
ordering $\Phi_{\mathrm{quad}} < \Phi_{\mathrm{sat}}$ 
is maintained at every step, so the hysteresis 
condition is never triggered. The CF estimate overlaps 
with \texttt{Fixed-QUAD} throughout, and both track 
the true state $z_t$ accurately.

This outcome directly validates 
Corollary~\ref{cor:non_intrusive} when the active 
structure is predictively consistent, CF introduces no 
overhead and reduces exactly to the corresponding 
fixed-structure Bayesian filter. Taken together with 
Experiment~\ref{sec:exp2}, these two experiments form 
a controlled pair — same hyperparameters, same 
candidate structures, same latent dynamics — that 
isolates the effect of the observation shift on CF 
behaviour. The contrast is sharp: a single shift at 
$\tau=200$ is sufficient to trigger exactly one 
structural transition in Experiment~\ref{sec:exp2}, 
while the absence of any shift here produces none.
Quantitative results are reported in 
Table~\ref{tab:all_experiments}.


\begin{figure*}[t]
        \centering
        \includegraphics[width=\textwidth]{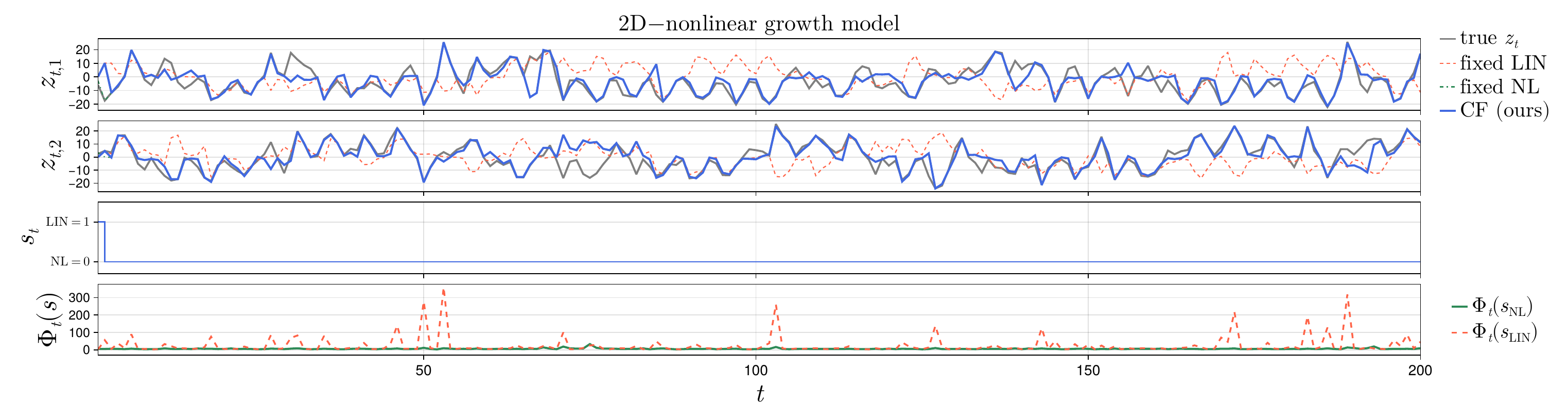}
\caption{%
        \textbf{Experiment~4.4}
        ($\mathcal{Z}=\mathbb{R}^{2}$, $T=200$,
        $N=1000$ particles, $s_0=s_{\mathrm{lin}}$).
        \textit{Panels 1--2:} Latent state estimates
        $z_{t,1}$ and $z_{t,2}$; CF recovers the accuracy of
        \texttt{Fixed~NL} in both dimensions after a single
        structural switch, while \texttt{Fixed~LIN} remains
        persistently biased.
        \textit{Panel 3:} Structure sequence $s_t$; CF commits
        to $s_{\mathrm{nl}}$ at $t\approx 2$ and produces no
        further switches.
        \textit{Panel 4:} Innovation scores
        $\Phi_t(s_{\mathrm{nl}})$ and $\Phi_t(s_{\mathrm{lin}})$;
        $\Phi_t(s_{\mathrm{nl}})$ remains persistently lower
        after $t\approx 2$.
}
        \label{fig:exp44}
\end{figure*}

\subsection{Experiment~4.4: Multidimensional latent state
	($\mathcal{Z}=\mathbb{R}^{2}$)}
\label{sec:exp44}

The theoretical results of 
Section~\ref{sec:CFDeeepSSSM} are stated for a general 
Polish space $\mathcal{Z}$ and do not rely on the latent 
state being scalar. This experiment confirms that the CF 
mechanism and its guarantees extend naturally to a 
two-dimensional setting, where higher score variance 
makes structure selection more challenging.

\smallskip
\noindent\textbf{System.}
The data-generating process extends the 
benchmark~\eqref{eq:exp4A_true_dyn}--\eqref{eq:exp4A_obs} 
to two independent dimensions with a phase 
offset $\varphi_{1}=0$ and $\varphi_{2}=1.0$\,rad:
\begin{align}
	z_{t+1,i} &= \tfrac{1}{2}\,z_{t,i}
	+ \frac{25\,z_{t,i}}{1+z_{t,i}^{2}}
	+ 8\cos(1.2\,t + \varphi_{i})
	+ w_{t,i},
	\label{eq:dyn2d}\\
	y_{t,i}   &= \tfrac{1}{20}\,z_{t,i}^{2} + v_{t,i},
	\label{eq:obs2d}
\end{align}
for $i=1,2$, with $w_{t,i}\sim\mathcal{N}(0,\sigma_w^2)$,
$v_{t,i}\sim\mathcal{N}(0,\sigma_v^2)$,
$\sigma_w^2=10$, $\sigma_v^2=1$.

\smallskip
\noindent\textbf{Candidate structures.}
As in Experiment~\ref{sec:exp1}, we consider
$\mathcal{S}=\{s_{\mathrm{nl}},\,s_{\mathrm{lin}}\}$.
Under $s_{\mathrm{nl}}$, the transition 
matches~\eqref{eq:dyn2d} exactly.
Under $s_{\mathrm{lin}}$, a linear transition
$z_{t+1,i} = \tfrac{1}{2}z_{t,i} + w_{t,i}$ is used 
with the same quadratic observation 
model~\eqref{eq:obs2d}, inducing structural mismatch 
in the transition component only.
Each structure-conditioned belief is propagated via a 
bootstrap particle filter~\cite{Gordon1993Bootstrap} with 
$N_p=1000$ particles per run.

\smallskip
\noindent\textbf{Implementation.}
Three methods are compared over $T=200$ steps and 
$M=100$ independent Monte~Carlo runs, all initialised 
at $s_0=s_{\mathrm{lin}}$ to test structural recovery:
\texttt{Fixed~LIN} ($s_t\equiv s_{\mathrm{lin}}$),
\texttt{Fixed~NL} ($s_t\equiv s_{\mathrm{nl}}$),
and CF with hysteresis margin $\delta=2.0$ and 
$W=10$-step windowed scores.
The larger margin relative to 
Experiments~\ref{sec:exp1}--\ref{sec:exp3} reflects 
the higher score variance that arises when $\Phi_t(s)$ 
accumulates contributions from both observation 
dimensions; the choice is consistent with
Corollary~\ref{cor:eventual_fixed_filtering}, which 
requires only that $\delta>0$ be calibrated against 
the score fluctuations induced by particle 
approximation (see Remark~\ref{rem:margin_2d}).

\begin{rem}[Score and margin in higher dimensions]
	\label{rem:margin_2d}
	In the two-dimensional setting, $\Phi_t(s)$ 
	accumulates contributions from both observation 
	dimensions, resulting in larger absolute values 
	and higher variance than in the scalar case.
	To suppress particle-induced score noise, a 
	$W=10$-step windowed average
	$\bar{\Phi}_t^W(s):=
	W^{-1}\sum_{k=0}^{W-1}\Phi_{t-k}(s)$
	is applied before the hysteresis check, and the 
	margin is set to $\delta=2.0$. Both choices are 
	consistent with 
	Corollary~\ref{cor:eventual_fixed_filtering}.
\end{rem}

\smallskip
\noindent\textbf{Results.}
Figure~\ref{fig:exp44} shows that CF identifies the
correct structure after a single early transition
($\rho_{\mathrm{sw}}=0.005$), after which the score
ordering $\Phi_t(s_{\mathrm{nl}})<\Phi_t(s_{\mathrm{lin}})$
is maintained and no further switching is triggered.
In both dimensions, CF recovers the accuracy of
\texttt{Fixed~NL}, consistent with the one-step
structural descent property
(Lemma~\ref{lem:struct_descent}) and the
finite-switching guarantee
(Corollary~\ref{cor:eventual_fixed_filtering}).
Quantitative metrics are reported in
Table~\ref{tab:all_experiments}.

These results confirm that the one-step structural 
descent property (Lemma~\ref{lem:struct_descent}) and 
the finite-switching guarantee
(Corollary~\ref{cor:eventual_fixed_filtering}) hold 
in the two-dimensional setting without any modification 
to the CF rule or its theoretical analysis.
Quantitative metrics are reported in 
Table~\ref{tab:all_experiments}, and establish that 
the three-layer theoretical framework of 
Section~\ref{sec:CFDeeepSSSM} generalises to 
multi-dimensional latent spaces as predicted.


\section{Conclusion}
\label{sec:conclusion}

We introduced cognitive flexibility (CF), a belief-level 
mechanism for online latent-structure selection in Bayesian 
filtering under structural mismatch. By selecting at each 
step the structure that minimises an innovation--based 
predictive score --- without modifying the underlying Bayesian 
recursion --- CF is well posed, exhibits a structural descent 
property, and reduces to standard filtering when a 
predictively consistent structure is available. Experiments 
across mismatch, shift, and well-specified regimes confirm 
that CF adapts only when necessary, switches finitely, and 
introduces no overhead under correct specification.
The irreducibility result (Theorem~\ref{thm:wellposedness_cf}) carries an immediate 
control-theoretic consequence: structural mismatch produces 
persistent degradation that parameter adaptation alone cannot 
correct. CF addresses this at the belief level, complementing 
robust and adaptive MPC frameworks~\cite{Hewing2020,Aswani2013} 
that assume fixed internal representations. Extending CF to 
closed-loop settings where the belief feeds directly into a 
control policy is a natural next step.





\bibliographystyle{elsarticle-num} 
\bibliography{Auto_Automatica} 

\end{document}